\useunder{\uline}{\ul}{}
\theoremstyle{plain}
\newtheorem{theorem}{Theorem}[section]
\theoremstyle{definition}
\theoremstyle{remark}
\title{Large Language Model as Universal Retriever in Industrial-Scale  Recommender System}
\author{Junguang Jiang\thanks{Equal contribution.}, Yanwen Huang\footnotemark[1], Bin Liu, Xiaoyu Kong, Xinhang Li, \\ \textbf{Ziru Xu, Han Zhu, Jian Xu, Bo Zheng}\textsuperscript{\Letter}  \\
    Taobao \& Tmall Group of Alibaba, China  \\ 
    {\tt\small {\{jiangjunguang.jjg,huangyanwen.hyw,zhuoli.lb,linlin.kxy,xinhangli.lxh,}} \\ 
    {\tt\small {ziru.xzr,zhuhan.zh,xiyu.xj,bozheng\}}@alibaba-inc.com}
}
\begin{document}

\maketitle

\begin{abstract}
  In real-world recommender systems, different retrieval objectives are typically addressed using task-specific datasets with carefully designed model architectures. We demonstrate that Large Language Models (LLMs) can function as universal retrievers, 
  capable of handling multiple objectives within a generative retrieval framework. 
  To model complex user-item relationships within generative retrieval, we propose multi-query representation. To address the challenge of extremely large candidate sets in industrial recommender systems, we introduce matrix decomposition to boost model learnability, discriminability, and transferability, and we incorporate probabilistic sampling to reduce computation costs. Finally, our Universal Retrieval Model (URM) can adaptively generate a set from tens of millions of candidates based on arbitrary given objective while keeping the latency within tens of milliseconds.
  Applied to industrial-scale data, 
  URM outperforms expert models elaborately designed for different retrieval objectives on offline experiments
   and significantly improves the core metric of online advertising platform by $3\%$.
%   Online A/B tests on our advertising platform further validate its effectiveness, showing significant improvement in the core metric.
  % When applied to industrial-scale data, our LLM achieves competitive results with expert models elaborately designed for different retrieval objectives. Online A/B test results in the Taobao display advertising platform also demonstrate the effectiveness of the proposed method in production environments, with core metrics improved by over \textbf{3\%}.

  % Furthermore, our analysis reveals that recommendation outcomes are highly sensitive to text input, highlighting the potential of prompt engineering in optimizing industrial-scale recommender systems.

\end{abstract}

\vspace{-5pt}
\section{Introduction}
\vspace{-5pt}
Recommender systems have become an integral part of people's daily lives, revolutionizing the way users interact with content and services.
To meet the diverse needs of users, recommender systems are evolving to become more sophisticated and multifaceted, leading to the emergence of a wide variety of objectives. For example, due to shifts in user behavior across contexts — such as different apps or interfaces within the same app — multi-scenario objectives are introduced \cite{MMOE}.
% for modeling internal relationships \cite{MMOE}. 
To accurately capture user intent, objectives like click prediction, purchase prediction, and favorite prediction are formulated \cite{cite:multi_objective_recommendation}. To avoid information cocoons and provide novel options, goals such as serendipity \cite{cite:discovery_recommendation, Serendipity} and long-tail item recommendations \cite{cite:longtail_recommendation} are established.

Most recommender systems use multi-channel retrieval to address the above issues \cite{huang2024comprehensivesurveyretrievalmethods, MultiChannel}. For each of the above objectives, a large amount of training data is collected, and specialized retrieval models are then designed, trained, evaluated, and deployed separately.
This approach has served well to make progress on narrow objectives, but when the objectives change, it requires collecting new data and training a new retrieval model carefully, which is time-consuming, lacks scalability, and sometimes faces challenges due to insufficient objective-specific data.
This pattern has more trouble when dealing with unclear objectives.  
% In industry,
% retrieval often faces a dilemma due to the inconsistency between offline and online metrics \footnote{During the ranking stage, the AUC of the CTR model generally shows a positive correlation with online CTR. Yet in the retrieval stage, boosting offline click hitrate doesn't always lead to improved online CTR. This inconsistency hasn't been fully explored, with two potential causes:
% (1) results from specific retrieval channels competing with others; (2) retrieval outputs undergoing multiple ranking models before gaining final exposure. }, and
% %  TODO 离在线不一致性问题怎么写
In industry, a common unresolved issue in retrieval is that for a given online metric, such as advertising revenue, there is often no clear corresponding offline metric \cite{EBR_facebook}. 
The current strategy for tackling this issue is using human prior knowledge to develop multiple heuristic offline training objectives and then conducting  A/B tests to assess their effectiveness.
% The current approach to addressing this is to use human prior to design multiple heuristic offline training objectives and then use A/B tests to verify their effectiveness.

% Due to lengthy process for deploying retrieval models in the industry, the search space of possible retrieval objectives is usually quite limited.

Another approach is learning multiple retrieval objectives simultaneously, which is known as multi-task learning \cite{Caruana1997MultitaskL}. However, this will easily lead to \textit{seesaw phenomenon} \cite{PLE}, where performance declines compared to learning objectives separately.
 Traditionally, objective conditioning is handled at the architectural level, where the parameter for each objective is manually designed according to its data size, and as the number of objectives continues to increase,
model design becomes increasingly challenging.
However, the emergence of LLMs offers a new approach in which objectives can be defined through text sequences \cite{cite:GPT,cite:GPT2,GPT3}.
% In this paradigm, the inputs and outputs for different retrieval objectives follow a unified format, with the model architecture and parameters fully shared.
Previous works \cite{kaplan2020scaling, hernandez2021scaling} indicate that increasing model size in such a structure can effectively reduce the seesaw phenomenon without handcrafted designs. 
% Besides, LLM presents a possibility to  alter the retrieval distribution by changing the input text, and the vast text space means that we can search for the optimal retrieval objective within a larger space.
Simultaneously, since text sequences can describe an infinite number of objectives,   when retrieval objectives are unclear, we can search within a larger objective space with just one trained model.

Despite the excellent versatility of LLMs, a key challenge to using them in retrieval is how to improve performance and efficiency while retaining versatility.
There are two main approaches. The first feeds both user features and item candidates to the LLMs for evaluation, which performs well on academic datasets, but is challenging to scale to billions of items in industrial systems \cite{DBLP:journals/corr/abs-2305-11700,bao2023bi,TallRec}. The second approach feeds user feature sequences to LLMs to generate target items, enhancing computational feasibility within the scope of generative retrieval.
Yet generative retrieval faces three inherent challenges:
% The second approach falls within the scope of generative retrieval, which has three inherent challenges:
(1) \textbf{Model Expressiveness}: 
While LLMs are highly expressive, generative retrieval struggles to effectively model user-item relationships because they generate target items using a single linear matrix \cite{zhu2018learning,zhu2019joint,gao2021deepretrievallearningretrievable}.
(2) \textbf{Learnability}: 
In industrial applications, the vast number of item candidates, often in the tens of millions, complicates learning the mapping from LLMs' hidden representations to the item candidate space due to the expansive parameter scale \cite{cite:TIGER}.
(3) \textbf{Efficiency}: Both LLM inference and generation within large candidates involve substantial computational costs.
% The method based on semantic IDs has identified and attempted to address issues of learnability and efficiency in generative retrieval. However, the introduction of LLMs has brought some new issues.
By incorporating semantic IDs and generating autoregressively within a reduced space   \cite{DBLP:conf/nips/0001YCWZRCYRR23,DBLP:journals/corr/abs-2411-11739,DBLP:conf/cikm/0007BLZ0FNC24,DBLP:journals/corr/abs-2502-16474,DBLP:journals/corr/abs-2504-06780}, the above issues can be alleviated to some extent.
However, the adoption of LLMs in generative retrieval introduces challenges to deployment, particularly due to the significant increase in online computational costs.

In this paper, we introduce the \textbf{Universal Retrieval Model (URM)}, a novel pathway towards implementing generative retrieval. 
URM unifies various retrieval objectives into a cohesive input-output framework and utilizes LLMs as feature generators to enhance versatility, facilitating prompt tuning in industrial recommender systems.
By adopting multi-query representation, URM significantly boosts the expressive capacity of generative retrieval models. Matrix decomposition further enhances URM's learnability, discriminability, and transferability, while probabilistic sampling effectively reduces training and generation costs when faced with an extremely large candidate set.
Finally, URM is capable of dynamically adjusting retrieval outputs based on input instructions and efficiently returning collections of items with latency within tens of milliseconds. Our extensive experiments on both public and industrial-scale datasets demonstrate the robust performance and versatility of URM. Furthermore, online A/B tests on our advertising platform confirm the effectiveness of the URM in real-world settings, achieving improvements of over $3\%$ in the key metric.

\vspace{-5pt}
\section{Retrieval in Recommender Systems}
\vspace{-5pt}
In typical recommender systems, the main goal of retrieval is to identify the subset of the candidate set $\mathcal{C}$ that has the highest value according to a specific objective, i.e., $\mathrm{argTopk}_{v \in \mathcal{C}} f(u,v)$, 
where function $f(u,v)$ is an objective function for user $u$ and candidate item $v$.

\textbf{Embedding-Based Retrieval (EBR).}
To handle the large candidate set $\mathcal{C}$,  EBR \cite{EBR_facebook} is commonly used in practice, where the inner product of user and item representations is used to express $f(u,v)$. 
However, EBR falls short in capturing the intricate dynamics of user-item preferences \cite{zhu2018learning, gao2021deepretrievallearningretrievable, MoL}.
Consequently, researchers are actively developing techniques to address large-scale retrieval challenges using more sophisticated models, such as Model-Based Retrieval and Generative Retrieval.

\textbf{Model-Based Retrieval.} 
Advanced retrieval models  \cite{zhu2018learning,zhu2019joint, zhuo2020learning, gao2021deepretrievallearningretrievable, chen2022approximate} use 
multiple MLP layers to model $f(u,v)$. 
% yield the final probability that the user is interested in the candidate item. 
To avoid the substantial computational cost associated with scoring the entire candidate set, model-based retrieval uses techniques such as hierarchical tree structures \cite{zhu2018learning,zhu2019joint, zhuo2020learning} or learnable paths \cite{gao2021deepretrievallearningretrievable} to build indices that model user interests from coarse to fine detail.
Model-based retrieval demonstrates strong fitting ability to a given objective, but its scalability is constrained in practice.
For different objectives $f$, such as maximizing the probability of user interests (active behaviors like clicks, purchases, etc.)  or maximizing exposure probability across different scenarios, it is often necessary to design multi-task learning architectures specifically \cite{jiang2022adaptivedomainnetworkmultidomain,MMOE,PLE}
or even maintain separate retrieval models for each objective (multi-channel retrieval) \cite{huang2024comprehensivesurveyretrievalmethods}.

\textbf{Generative Retrieval.}
The success of generative AI \cite{cite:GPT,cite:GPT2, GPT3, gpt4} has led to an increasing amount of work focusing on using generative approaches for retrieval \cite{cite:TIGER,cite:IDGenRec,DBLP:journals/corr/abs-2502-18965}. 
The goal of generative retrieval is to identify the subset that maximizes conditional probability $\mathrm{argTopk}_{v \in \mathcal{C}} P(v|u)=\text{softmax}(W^T F(u))|_v$, 
where $F(u) \in R^D $ is a feature generator for user $u$ and $W\in R^{D \times |\mathcal{C}|}$ is a linear matrix that decodes hidden features into the candidate item space.
The benefit of generative retrieval is its better compatibility with Transformer architectures as well as LLMs, whose scalability has been thoroughly validated. 
% Additionally, generative retrieval is more likely to leverage LLMs as the feature generator, fully utilizing their world knowledge to enhance recommender systems \cite{cite:IDGenRec}.

Generative retrieval also has its own issues, such as  \textit{model expressiveness}. In model-based retrieval, the relationship between items and users is modeled using multi-layer MLPs, which can theoretically express arbitrary functions. In contrast, the generative retrieval method models the user-item relationship only through inner products, which significantly limits the model's capacity \cite{zhu2018learning,MoL,huang2024comprehensivesurveyretrievalmethods}.
Generative retrieval also has issues in
\textit{learnability} and \textit{efficiency}.
Unlike NLP, where the candidate is typically in the hundreds of thousands, industrial recommender systems deal with millions to billions of candidates.
Previous research has identified significant challenges in these high-dimensional spaces, pointing out that learning the $W$ matrix is particularly challenging and that operations like inner product and softmax incur substantial inference costs \cite{cite:TIGER,DBLP:conf/nips/0001YCWZRCYRR23,DBLP:journals/corr/abs-2410-09560,DBLP:journals/corr/abs-2410-19276}. 
Therefore, they introduce semantic IDs, which implicitly break down the candidate set $\mathcal{C}$. Specifically, $T$ semantic IDs $\{ \overline{v}_t \}_{t=1}^T$ are generated in an autoregressive manner, $\overline{v}_t = \arg\max\overline{W}^T F(u, \overline{v}_1, ..., \overline{v}_{t-1})$, where $\overline{W}$ is low-dimensional matrices, thus significantly reducing the optimization difficulty.
Then, the $T$ semantic IDs collectively determine the final item ID. 
However, some research found that semantic IDs often struggle with fine-grained similarity modeling (discriminability)
and cold-start problems (transferability) \cite{cite:COBRA,yang2024unifyinggenerativedenseretrieval,DBLP:conf/kdd/WangXHZ0LLLXZD24,DBLP:conf/recsys/ZhuJLQD024,DBLP:journals/corr/abs-2504-06636}.

In this work, we offer a new implementation of generative retrieval that enhances model expressiveness, learnability, and efficiency, and upgrade it to universal retrieval that surpasses the performance of model-based retrieval while offering superior task scalability.

\vspace{-5pt}
\section{Universal Retrieval Model}
\label{section:URM}
\vspace{-5pt}
Formally, the goal of universal retrieval model is to use a single model to identify the subset of the candidate set $\mathcal{C}$ that maximizes conditional probability for user $u$ under any given objective $o \in \mathcal{O}$, i.e., $\mathrm{argTopk}_{v \in \mathcal{C}} P(v|u,o)$, 
where $o$ might be encountered during training (multi-task learning), or may never appear during training (zero-shot learning). 
% The value of universal retrieval stems from two aspects. Firstly, the retrieval objectives in recommeder systems are diverse, which leads to the multi-channel retrieval architectures. And universal retrieval allows us to use a single model to accomplish retrieval for multiple objectives. Secondly, retrieval often faces a dilemma in industry due to the unclear relationship between its objectives and online metrics and
% universal retrieval offers a solution to this problem -- 
% first collect online performance for multiple retrieval objectives and then select the optimal one.
Universal retrieval demands higher model capacity, making generative retrieval more suitable due to its scalability and compatibility with LLMs. Specifically, $P(v|u,o)=\text{softmax}(W^T F(u, o))|_v. $ 
 Section \ref{sec: representations} will detail obtaining the universal representation $F(u, o)$ for a user $u$ and any objective $o$. Section \ref{sec:mapping} will tackle the learnability challenges of $W$ with tens of millions of candidates. Section \ref{sec:Multiplication} will focus on reducing the inference cost of $\text{softmax}(W^T F(u, o))$ using sampling techniques.

\subsection{Representations for  Users \& Any Objective}
\label{sec: representations}
% Inspired by GPT \cite{cite:GPT2}, we use natural language to define different retrieval tasks and represent users and different objectives in sequence form. 

Following the standard practice of Generative Pre-trained Transformers (GPTs) \cite{cite:GPT2}, we define different retrieval tasks in natural language and represent users and objectives in sequential form.

\textbf{User Description.} 
 To improve efficiency when handling long user behavior sequences, 
 URM treats items as a kind of special token following \cite{CoLLM,E4SRec,LlaRA,Adapt_LLM_Collaborative,cite:IDGenRec}.
Typical inputs are sentences composed of common text tokens and item IDs (such as [7502]) as follows:
 \begin{tcolorbox}[colback=blue!2!white,leftrule=2.5mm,size=title]
\label{obs1}
\footnotesize
The user attributes are as follows: age \{AGE\}, gender \{GENDER\}, located in \{PROVINCE\}, \{CITY\}. The user has favorited [7502]..., purchased [8274]..., clicked [8380].... 
\end{tcolorbox}
We have aggressively serialized all user features, including static attribute features, statistical features, and various behavioral sequence features. 
 In model-based retrieval methods, when a feature is missing, it is replaced with zeros, and sequence features often require truncation or padding to fit a fixed length. In contrast, our serialized approach omits the text description (e.g., \textit{user has purchased}) for any empty feature, and only truncates sequence features without padding, which reduces computational costs and enhances the model's robustness against varied inputs.

\textbf{Retrieval Objectives.} 
We define objectives using different text descriptions, such as: (1) Multi-scenarios: Please retrieve items for scenario A/B/C.
(2) Multi-behaviors: Please retrieve items that the user will click on / purchase / favorite.
 (3) Long-tail item: Please retrieve long-tail items.
 (4) Serendipity: Please retrieve items from new categories.
(5) Long-term interest: Please find items that match user's long-term interests.
 (6) Search: Please retrieve items that match the given query.
 We construct the training set with specific positive samples under these constraints to ensure that the model generates appropriate results for each objective.

 \begin{figure*}[!b]
    \centering
    \vspace{-25pt}
    \includegraphics[width=0.68\linewidth]{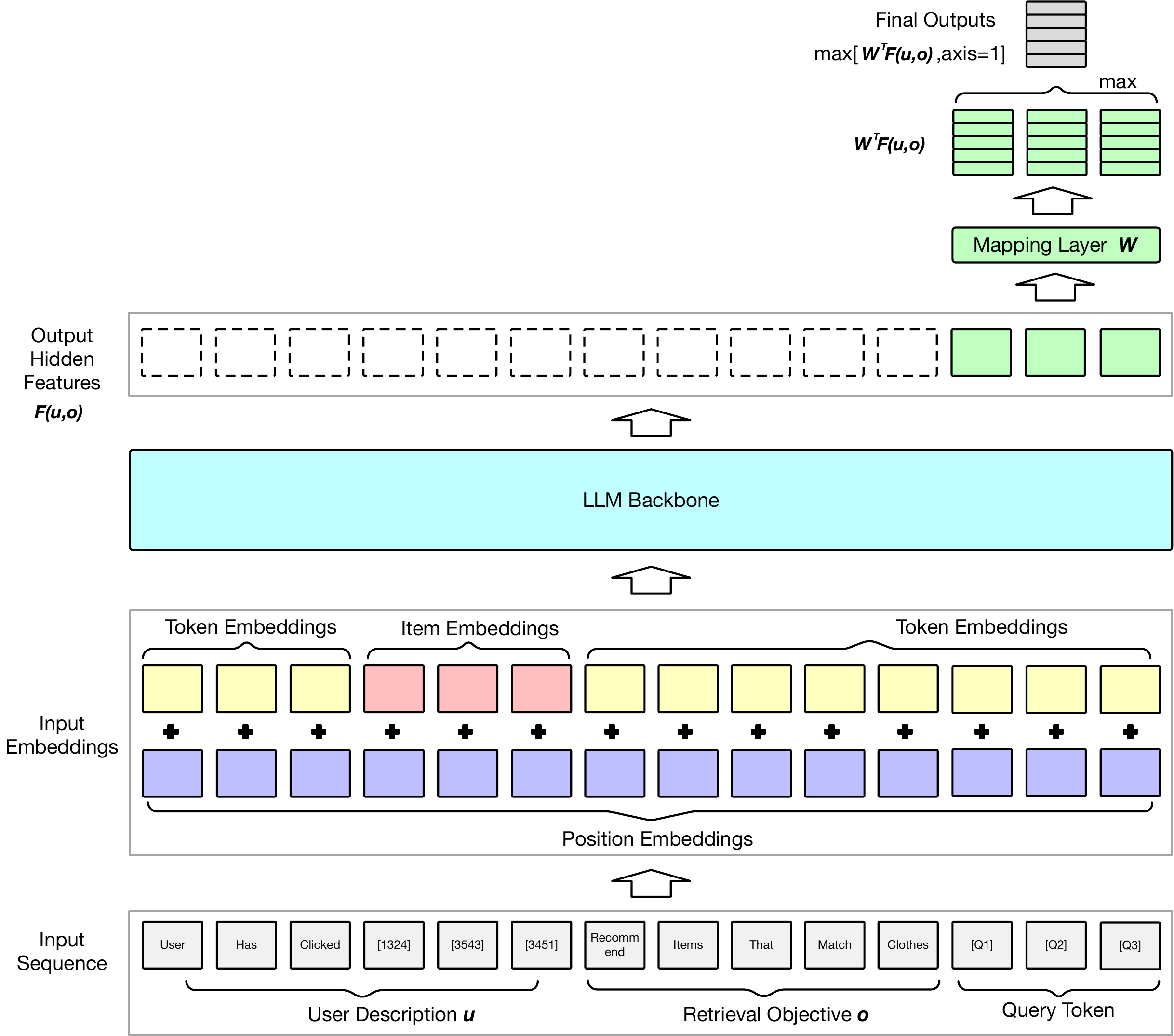}
    \vspace{-5pt}
    \caption{URM architecture. 
    The input sequence consists of user description $u$, retrieval objective $o$, and several fixed query tokens. 
    Item IDs in the user description are mapped to item embeddings by a distributed hashtable, and other tokens are mapped to token embeddings.
    The item embeddings or token embeddings are summed up with position embeddings, and fed into the LLM backbone (For LLMs using RoPE \cite{su2023roformerenhancedtransformerrotary},  position embedding is not explicitly added).
     The outputs corresponding to the query tokens are then mapped to the item candidate space through $W$.  To optimize for retrieval objectives in recommender systems, the parameters of LLM backbone are fully fine-tuned.
    }
    \label{fig:arch}
    % \vspace{-25pt}
\end{figure*}

\textbf{Representations for Users \& Objectives.} In universal retrieval, the model must adapt its output to various user descriptions $u$ and retrieval objectives $o$, demanding high model capacity.   Thus, we use pre-trained LLMs as the feature generator $F$ due to their general capabilities across tasks \cite{gpt4}. 
Text tokens are mapped to embeddings via the LLM's vocabulary embedding table. With industry item IDs reaching billions, we employ a separate Distributed HashTable \cite{ParameterServer} to convert item IDs into unique embeddings, and then use a one-layer MLP to map these embeddings to the same dimension as the token embedding.  As shown in Figure \ref{fig:arch}, 
the embedding at each position in the sequence is then the sum of the position embedding and either the token or item embedding.
These embeddings are then processed through the LLMs to generate the features $F(u, o)$.

\textbf{Multi-Query Representation.}
Generative retrieval faces challenges in expressing user-item relationships effectively. Function approximation theory suggests that a linear combination of appropriately chosen basis functions or feature representations can approximate certain classes of complex functions \cite{HANDSCOMB196661}. Inspired by this conclusion, we propose adding 
multiple special query tokens to the end of the input sequence, allowing the LLMs to generate $M$ user representations $\mathbf{F}(u,o)\in R^{D\times M}$ at these positions during a single LLM forward process. These representations are then used to compute inner products with $W\in R^{D\times |\mathcal{C}|}$, and their max value is used as the final score, i.e. $P(v|u,o)=\text{softmax}(\max(W^T F(u, o), \text{axis}=1))|_v$
 (in practice, we found that the max function works better than linear combinations). This approach fully preserves the model's ability to capture complex interactions between users and items, thus enhancing the model's potential performance.
Note that while the query tokens are fixed, their embeddings are learnable. Besides, the forward attention inherent in LLMs allows the hidden features of different query tokens to access the features from preceding query tokens, which enables representations of different query tokens to capture different user interests (see Appendix \ref{appendix:multi_query_representation} for visualizations).

\vspace{-5pt}
\subsection{Mapping to Large-scale Candidate Sets}
\vspace{-5pt}

\label{sec:mapping}
Using LLMs as feature generators offers strong capability but also increases computational costs. For example, with a sequence length of $1024$, a single forward pass of a $1.5$B model can achieve a latency of $50$ ms, which is the limit for our online recommendation retrieval systems. 
Generative retrieval methods using semantic IDs require multiple calls to the feature generator, making the computational cost prohibitive when using  LLMs as feature generators.
This necessitates exploring a new implementation for generative retrieval, one that only requires a single LLM forward pass. 

Recall that the primary aim of introducing semantic IDs is to address the issue that matrix $W$ is too large and difficult to learn.
We propose an alternative solution to improve its learnability using matrix decomposition 
$W=UV^T$, where the shapes of $U$ and $V$ are $D \times H$ and $|\mathcal{C}| \times H$ respectively, and $H$ is a lower rank dimension. This approach reduces the parameter size, thereby facilitating easier learning and significantly decreasing computational demands.
% TODO 说明为什么用UV,说明一下V_dis怎么来的

Further, we can improve the mapping matrix $W$  to incorporate both discriminability and transferability. 
In retrieval problems, discriminability refers to the model's ability to accurately rank two relatively similar items under a given objective.  
When there is sufficient training data for certain items, assigning distinct parameters for these items is the most effective way to increase discriminability. We denote these distinct parameters as $V_{\text{dis}}$, which correspond to the fully learnable item-side representations during training.
Transferability refers to the retrieval model's ability to identify items that have not been seen in the training set based on their generalized features. In industrial recommender systems, online item candidate sets are constantly changing every minute, making transferability crucial for the retrieval model.
To improve transferability, we first serialize the generalized features of each item, including static attributes such as title and category, as well as statistical features like sales and click-through rates, into text. Examples are as follows: 
 \begin{tcolorbox}[colback=blue!2!white,leftrule=2.5mm,size=title]
\label{obs2}
\footnotesize
The item title is \{TITLE\}. The category is \{CATEGORY\}. The price is \{PRICE\}. The shop name is \{SHOP\}. Over the past 7 days, its sales volume is \{SALE\}  and click-through rate is \{CTR\}.
\end{tcolorbox}
This serialized text is then fed into a General Text Embedding \cite{zhang2024mgte,li2023towards} LLM to obtain a fixed high-dimensional LLM representation. Then, a learnable linear layer is applied to produce a lower-dimensional matrix $V_{\text{trans}}$.
Then final mapping matrix is  $W = U(V_{\text{dis}}+V_{\text{trans}})^T$, 
where the total learnable parameters are $(|\mathcal{C}|+2D) \times H$. For items that never appear in the training set, the mapping matrix $W = UV_{\text{trans}}^T$ allows for the retrieval of cold-start items through transferability.

\vspace{-5pt}
\subsection{Approximation of Large Matrix Multiplication}
\vspace{-5pt}
\label{sec:Multiplication}
The learnability issue of $W$ can be solved by matrix decomposition, yet the computation costs remain high due to the high dimension of $|\mathcal{C}|$.
During training, we can approximate the negative log-likelihood using Noise Contrastive Estimation (NCE) Loss \cite{gutmann2010noise}, which avoids large matrix operations and significantly accelerates the training process,
\begin{equation}
 \min \mathcal{L}_{\text{NCE}} (u, o) =   - \sum_{v \in \mathcal{P}(u, o)} \log \dfrac{\exp [\max(W_v^T\mathbf{F}(u,o))]}{ \sum_{z\in \{v\} \cup \mathcal{N}} \exp [\max(W_z^T\mathbf{F}(u,o))]} ,
\end{equation}
where $\mathcal{P}(u, o)$ denotes the set of positive samples for user $u$ under objective $o$ and  $\mathcal{N}$ represents negative examples sampled from the item candidate set according to their occurrence frequency.

During inference, we can also sample a subset  $S$ for retrieval,
\begin{equation}
    \text{TopK}_{v \in \mathcal{S}} P(v|u,o)= \text{softmax}(\max(W_S^T \mathbf{F}(u,o), \text{axis}=1) )|_v
\end{equation}
where $W_S$ refers to the sub-matrix of  $W$ corresponding to the subset $S$.
Although this method can reduce computational cost,  many potential items of user interest might never be retrieved.
 To tackle this issue, 
 we employ probability sampling repeatedly, as detailed in Algorithm  \ref{algorithm}. 
 Specifically, 
 we construct an ANN index \cite{HNSW} for $W$ that allows each item $s$ to locate its neighbors  $\text{NBR}(s)$. Initially, a fixed subset is randomly selected from the candidate set, and item probabilities are computed using the relevant sub-matrix.  Then $K$ items are sampled based on their probabilities, forming a new subset influenced by their neighbors. Since probabilities are calculated on a subset shaped by the previous step's outcomes, this method resembles a 1-gram autoregressive model. The underlying assumption of Algorithm \ref{algorithm} is that items close in $W$ matrix will also have similar $P(v|u,o)$. Appendix \ref{proof} provides a strict theoretical guarantee and its proof.

\begin{wrapfigure}{R}{0.45\textwidth}
\vspace{-15pt}
\begin{minipage}{0.45\textwidth}
    \begin{algorithm}[H]
\scriptsize
    \caption{Probabilistic Sampling.}
    \label{algorithm}
    
    $N(0) = \text{Subset}(\mathcal{C})$
    
    $\hat{\mathbf{F}}(u, o) = U^T \mathbf{F}(u, o)$
    
    \For{$t\leftarrow 1$ \KwTo $T$}{
        $P(v|u,o) = \text{softmax}[{\max(V_{N(t-1)} \hat{\mathbf{F}}(u,o), \text{axis}=1)}/{\tau}]|_v$ \\
        
        $S(t) \sim \text{Sample}_{v \in N(t-1)}^K P(v|u,o)$ \\
        % =\text{softmax}[{\max(W_{N(t-1)}^T \mathbf{F}(u,o), \text{axis}=1)}/{\tau}]|_v $
        
        $N(t) =  \cup_{s\in S(t)} (\text{NBR}(s) \cup s) $
    }

    \textbf{return} $S(T)$
\end{algorithm}
\end{minipage}
\vspace{-15pt}
\end{wrapfigure}
 
%  \begin{algorithm}[H]
%  \scriptsize
%     \caption{Probabilistic Sampling.}
%     \label{algorithm}
    
%     $N(0) = \text{Subset}(\mathcal{C})$
    
%     $\hat{\mathbf{F}}(u, o) = U^T \mathbf{F}(u, o)$
    
%     \For{$t\leftarrow 1$ \KwTo $T$}{
%         $P(v|u,o) = \text{softmax}[{\max(V_{N(t-1)} \hat{\mathbf{F}}(u,o), \text{axis}=1)}/{\tau}]|_v$ \\
        
%         $S(t) \sim \text{Sample}_{v \in N(t-1)}^K P(v|u,o)$ \\
%         % =\text{softmax}[{\max(W_{N(t-1)}^T \mathbf{F}(u,o), \text{axis}=1)}/{\tau}]|_v $
        
%         $N(t) =  \cup_{s\in S(t)} (\text{NBR}(s) \cup s) $
%     }

%     \textbf{return} $S(T)$
% \end{algorithm}

%  \begin{algorithm}[H]
%  \footnotesize
%     \caption{Probabilistic Sampling.}
%     \label{algorithm_GR}
    
%     $\overline{v}(0) = []$
    
%     \For{$t\leftarrow 1$ \KwTo $T$}{
%         $P(\overline{v}|u,o) = \text{softmax}[{\max(W^T \mathbf{F}(u,o,\overline{v}(t-1)), \text{axis}=1)}/{\tau}]|_v$ \\
        
%         $\overline{v}  \sim \text{Sample} P(\overline{v}|u,o)$ \\
%         % =\text{softmax}[{\max(W_{N(t-1)}^T \mathbf{F}(u,o), \text{axis}=1)}/{\tau}]|_v $
        
%         $\overline{v}(t) = \overline{v}(t-1) + [\overline{v}]  $
%     }

%     \textbf{return} $\overline{v}(T)$
% \end{algorithm}

Here, we follow the strategy used in NLP by employing probabilistic sampling instead of TopK selection \cite{holtzman2020curiouscaseneuraltext,wang2023selfconsistencyimproveschainthought}. 
Theoretically, probabilistic sampling provides a fairer treatment for items ranked at positions $K$ and $K+1$.
As the temperature  $\tau$ approaches $0$, the retrieval result of probabilistic sampling will gradually approximate that of TopK selection. Besides, we use the same $\mathbf{F}(u,o)$ in each probability calculation, thus it does not increase the inference cost of LLMs.
% After optimization, the complexity of matrix computation is $\mathcal{O}(MTH(D+K\times\max(|\text{NBR}(\cdot)|))$, while using the full $W$ for matrix computation incurs a cost of $\mathcal{O}(MD|\mathcal{C}|)$.
After optimization, the complexity of matrix computation is $\mathcal{O}(MH(D+TK\times\max(|\text{NBR}(\cdot)|))$, while using the full $W$ for matrix computation incurs a cost of $\mathcal{O}(MD|\mathcal{C}|)$.
In practice, we have $\tau=0.07, M=128, T=4, H=128, D=4096, K=1000, \max(|\text{NBR}(\cdot)|)=32, |\mathcal{C}|\approx 10^7$, thus, the FLOPS is reduced from $5000G$  to $2G$ 
% to nearly $1/2000$
, relatively small compared to LLMs.

\begin{wrapfigure}{r}{0.45\textwidth}
\vspace{-25pt}
    \centering
    \includegraphics[width=1\linewidth]{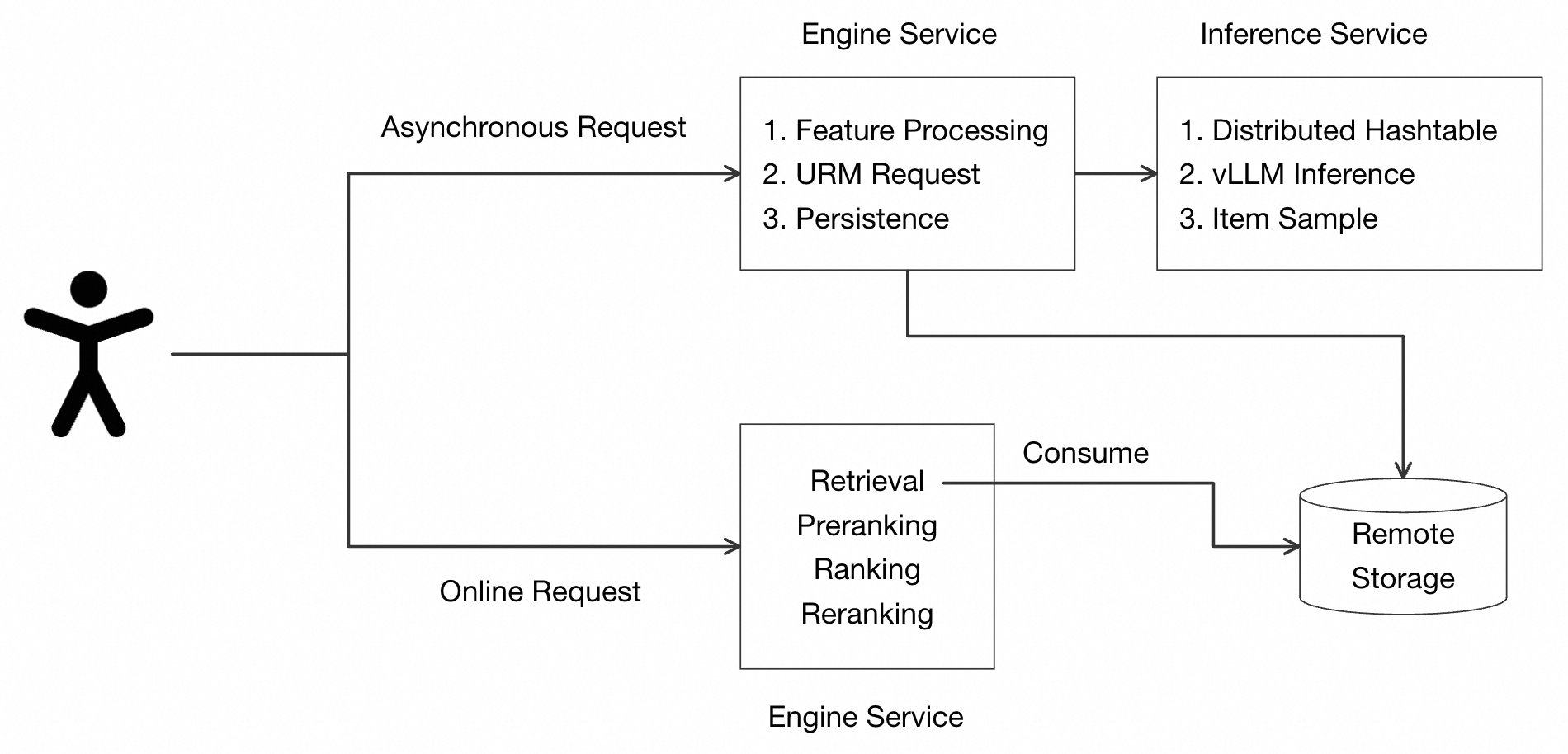}
    \vspace{-15pt}
    \caption{Online Serving System. }
    \vspace{-12pt}
    \label{fig:system_impl}
\end{wrapfigure}

\vspace{-8pt}
\section{System Implementation}
\vspace{-8pt}

Figure \ref{fig:system_impl} illustrates the online serving system of  URM. 
Considering the latency and computational cost of LLMs, we develop an asynchronous workflow to meet online requests. URM inference is triggered asynchronously in response to user actions, such as clicks and purchases.
Then the engine service reads user features and concatenates them with the retrieval objective texts, and sends a request to the inference service. The inference service reads item embeddings from a distributed hashtable, concatenates them with token embeddings, and then calls vLLM \cite{vllm} for inference (a single prefill operation). Finally, the sampling technique is used to generate the final retrieval results, which are then stored persistently for use during the online retrieval phase.
By aggregating user behavior within a configurable window (we use ten-minute online) to trigger a URM inference, the computational resources can be reduced to those needed for model-based retrieval.

\vspace{-8pt}
\section{Experiments}
\vspace{-8pt}
We validate URM's effectiveness using $4$ public datasets (Section \ref{sec:open_dataset_experiment}), an industrial-scale offline dataset (Section \ref{sec:industrial_experiment}), and online A/B tests (Section \ref{sec:online_results}). Ablation studies are presented in Section \ref{sec:ablation_study}, while universal retrieval performance is assessed in Section \ref{sec:universal_retrieval_performance}. 
We use Qwen-7B \cite{qwen} as the feature generator by default, and the results for other LLMs can be found in \ref{appendix:more_backbone}. The inference efficiency is given in Appendix \ref{appendix:inference_efficiency}.
Our code is based on LLaMA-Factory \cite{zheng2024llamafactory} and we will release our code related to the public datasets upon acceptance.

\vspace{-5pt}
\subsection{Public Dataset Experiments}
\vspace{-5pt}
\label{sec:open_dataset_experiment}
\textbf{Dataset.} We first evaluate the performance of URM on four widely-recognized public datasets: Sports \& Outdoors, Beauty and Toys \& Games \cite{amazon}, and Yelp \cite{yelp}. 
We follow the preprocessing methods used in recent works \cite{P5, E4SRec, cite:S3Rec} to construct the training and test datasets.

\textbf{Models.}
We use multiple retrieval methods as baseline:
(1) HGN \cite{cite:HGN} leverages a hierarchical structure to model user-item interactions.
(2)  GRU4Rec \cite{cite:GRU4Rec} 
    employs GRUs to model sequential user behavior.
(3)  Caser \cite{cite:Caser}
    leverages CNNs to capture sequential patterns in user behavior. 
(4)    BERT4Rec \cite{cite:Bert4Rec} adapts the BERT architecture for sequential recommendation.
(5) FDSA \cite{FDSA}
applies a self-attention module to model the relationships between features.
 (6)   SASRec \cite{cite:SasRec} employs self-attention mechanisms to model user behavior sequences.
 (7)   S3-Rec \cite{cite:S3Rec}
    enhances sequential recommendation by incorporating self-supervised learning.
 (8)   E4SRec \cite{E4SRec}  integrates item IDs within  LLMs \cite{Llama}.
 (9)   P5 \cite{cite:P5}  employs the T5 \cite{T5} model and takes recommendation tasks as purely natural language tasks.
 (10)  TIGER \cite{cite:TIGER} introduces semantic IDs into  generative retrieval.
 (11)  IDGenRec \cite{cite:IDGenRec} utilizes LLMs to create semantically rich and unique textual IDs for items.
 (12)  COBRA \cite{cite:COBRA} incorporates a cascaded sparse-dense representation framework to integrate sparse semantic IDs with dense vectors.
 The user behavior sequence length is limited to $100$.

 The performance is evaluated using  Hit Rate (HR@$K$) and Normalized Discounted Cumulative Gain (NDCG@$K$), computed at different ranking positions.
As presented in Table \ref{table:public_k5}, 
URM outperforms the strongest baseline by an average of $\textbf{46\%}$ and $\textbf{29\%}$ in terms of HR@5 and NDCG@5, relatively.
\begin{table}[!t]
\vspace{-5pt}
\addtolength{\tabcolsep}{2pt}
\centering
\scriptsize
\caption{Performance on $4$ public datasets (Only results for $K= 5$ are shown here, while those for $K = 10$ are available in Appendix \ref{appendix:more_public_k}). RI: Relative Improvement.}
\vspace{-5pt}
\begin{tabular}{@{}l|cc|cc|cc|cc@{}}
\toprule
\multicolumn{1}{c|}{\multirow{2}{*}{\textbf{Methods}}} & \multicolumn{2}{c|}{\textbf{Sports}} & \multicolumn{2}{c|}{\textbf{Beauty}} & \multicolumn{2}{c|}{\textbf{Toys}} & \multicolumn{2}{c}{\textbf{Yelp}} \\ \cmidrule{2-9} 
\multicolumn{1}{c|}{}                                  & HR@5              & NDCG@5           & HR@5              & NDCG@5           & HR@5             & NDCG@5          & HR@5            & NDCG@5          \\ \midrule
HGN                                                    & 0.0189            & 0.0120           & 0.0325            & 0.0206           & 0.0321           & 0.0221          & 0.0186          & 0.0115          \\
GRU4Rec                                                & 0.0129            & 0.0086           & 0.0164            & 0.0099           & 0.0097           & 0.0059          & 0.0152          & 0.0099          \\
Caser                                                  & 0.0116            & 0.0072           & 0.0205            & 0.0131           & 0.0166           & 0.0107          & 0.0151          & 0.0096          \\
BERT4Rec                                               & 0.0115            & 0.0075           & 0.0203            & 0.0124           & 0.0116           & 0.0071          & 0.0051          & 0.0033          \\
FDSA                                                   & 0.0182            & 0.0122           & 0.0267            & 0.0163           & 0.0228           & 0.0140          & 0.0158          & 0.0098          \\
SASRec                                                 & 0.0233            & 0.0154           & 0.0387            & 0.0249           & 0.0445           & 0.0236          & 0.0162          & 0.0100          \\
S3-Rec                                                 & 0.0251            & 0.0161           & 0.0387            & 0.0244           & 0.0443           & 0.0294          & 0.0201          & 0.0123          \\
E4SRec                                                 & 0.0281            & 0.0196           & 0.0525            & 0.0360           & 0.0566           & 0.0405          & 0.0266          & 0.0189          \\
P5                                                     & 0.0387            & 0.0312           & 0.0508            & 0.0379           & 0.0648           & {\ul 0.0567}    & {\ul 0.0574}    & {\ul 0.0403}    \\
TIGER                                                  & 0.0264            & 0.0181           & 0.0454            & 0.0321           & 0.0521           & 0.0371          & -               & -               \\
IDGenRec                                               & {\ul 0.0429}      & {\ul 0.0326}     & {\ul 0.0618}      & {\ul 0.0486}     & {\ul 0.0655}     & 0.0481          & 0.0468          & 0.0368          \\
COBRA                                                  & 0.0305            & 0.0215           & 0.0537            & 0.0395           & 0.0619           & 0.0462          & -               & -               \\
URM                                                    & \textbf{0.0733}   & \textbf{0.0488}  & \textbf{0.0929}   & \textbf{0.0671}  & \textbf{0.0888}  & \textbf{0.0619} & \textbf{0.0724} & \textbf{0.0476} \\ \midrule
\textbf{RI}                                                     & +70.9\%           & +49.7\%          & +50.3\%           & +38.1\%          & +35.6\%          & +9.2\%          & +26.1\%         & +18.1\%         \\ \bottomrule
\end{tabular}
\label{table:public_k5}
\vspace{-5pt}
\end{table}

\begin{table}[!t]
\addtolength{\tabcolsep}{0.5pt}
\centering
\scriptsize
\caption{Performance on the industrial dataset (metric: R@1000).}
\vspace{-5pt}
\begin{tabular}{@{}llllllllllll@{}}
\toprule
\textbf{Model} &
  \textbf{Learning Method} &
  \textbf{CPR} &
  \textbf{RSA} &
  \textbf{RSB} &
  \textbf{RSC} &
  \textbf{SR} &
  \textbf{LR} &
  \textbf{LIR} &
  \textbf{PPR} &
  \textbf{RQ} &
  \textbf{AVG} \\ \midrule
\multirow{2}{*}{\begin{tabular}[c]{@{}l@{}}Two-tower \\ Model\end{tabular}} &
  STL &
  0.129 &
  0.271 &
  0.166 &
  0.129 &
  0.069 &
  0.066 &
  0.117 &
  0.146 &
  0.355 &
  0.161 \\
 & MTL          & 0.120       & 0.205 & 0.166       & 0.135       & 0.064 & 0.115          & 0.103          & 0.173       & 0.257       & 0.149       \\ \midrule
\multirow{2}{*}{\begin{tabular}[c]{@{}l@{}}Transformer-\\ based Model\end{tabular}} &
  STL &
  0.198 &
  0.409 &
  0.293 &
  0.208 &
  {\ul 0.104} &
  0.115 &
  0.213 &
  0.143 &
  0.593 &
  0.253 \\
 & MTL          & 0.192       & 0.390 & 0.319       & 0.221       & 0.076 & 0.218          & 0.207          & 0.401       & 0.744       & 0.308       \\ \midrule
\multirow{5}{*}{\begin{tabular}[c]{@{}l@{}}Attention-\\ DNN\end{tabular}} &
  STL &
  0.253 &
  {\ul 0.477} &
  0.338 &
  0.260 &
  \textbf{0.106} &
  0.213 &
  {\ul 0.251} &
  0.353 &
  0.651 &
  0.323 \\
 & MTL          & 0.238       & 0.456 & 0.375       & {\ul 0.277} & 0.062 & {\ul 0.336}    & \textbf{0.265} & 0.478       & 0.671       & 0.351       \\
 & MTL-SharedBottom & 0.243       & 0.442 & 0.376       & 0.270       & 0.072 & \textbf{0.337} & 0.224          & {\ul 0.505} & 0.745       & 0.357       \\
 & MTL-MMoE         & 0.233       & 0.439 & 0.375       & 0.257       & 0.070 & 0.325          & 0.218          & 0.491       & 0.736       & 0.349       \\
 & MTL-PLE          & {\ul 0.256} & 0.451 & {\ul 0.397} & 0.274       & 0.062 & 0.327          & 0.224          & 0.512       & {\ul 0.761} & {\ul 0.363} \\ \midrule
% \multirow{2}{*}{URM} & STL & 0.137          & 0.226          & 0.219          & 0.089           & 0.042         & 0.089           & 0.107          & 0.427          & 0.668   & 0.223 \\
 URM & 
  MTL &
  \textbf{0.263} &
  \textbf{0.530} &
  \textbf{0.439} &
  \textbf{0.362} &
  0.093 &
  0.285 &
  0.240 &
  \textbf{0.581} &
  \textbf{0.835} &
  \textbf{0.403} \\ \bottomrule
\end{tabular}
\label{table:taobao}
\vspace{-15pt}
\end{table}

\vspace{-5pt}
\subsection{Industrial-scale Experiments}
\vspace{-5pt}
\label{sec:industrial_experiment}
\textbf{Dataset.}   Next, we verify the effectiveness of URM in an industrial-scale dataset, which is obtained from real traffic logs of the online system. 
The typical objectives include \textbf{C}lick \textbf{P}rediction \textbf{R}etrieval (\textbf{CPR}), \textbf{R}etrieval for \textbf{S}cene \textbf{A} (\textbf{RSA}), \textbf{S}cene \textbf{B} (\textbf{RSB}), \textbf{S}cene \textbf{C} (\textbf{RSC}), \textbf{S}erendipity  \textbf{R}etrieval (\textbf{SR}), \textbf{L}ong-term \textbf{R}etrieval (\textbf{LR}), \textbf{L}ong-tail \textbf{I}tem \textbf{R}etrieval (\textbf{LIR}),  \textbf{P}urchase \textbf{P}rediction \textbf{R}etrieval (\textbf{PPR}) , and  \textbf{R}etrieval with \textbf{Q}uery (\textbf{RQ}). 
This dataset contains hundreds of millions of samples and more than one billion distinct items in user behavior sequences. 
The candidate set contains tens of millions of items.
A more detailed definition of the dataset is in Appendix \ref{appendix:dataset_details}.
We use samples from day $1$ to day $\mathcal{T}$ for training and the samples of day $\mathcal{T}+1$ for testing.
To improve training efficiency, \textit{we aggregate each user's daily behaviors as the target set}, which makes these retrieval tasks more challenging.
All methods are compared in this setting to facilitate a more effective result.

\textbf{Models.} For baseline, we use the most commonly used methods in practice.  (1) Two-tower Model, which uses two multi-layer MLPs to convert user-side and item-side features into embeddings, and uses the inner product to obtain the final score \cite{EBR_facebook}.  
(2) Transformer-based Model,  which uses a multi-layer transformer to convert user behavior sequence features into user embeddings \cite{cite:HSTU,liu2024kuaiformertransformerbasedretrievalkuaishou}.
(3) Attention-DNN, which calculates cross-attention between the user behavior sequence and the target, and then uses multiple layers to calculate the final score. 
For each method, we implement a single-task learning version (STL) with only one objective and a multi-task learning version (MTL) with multiple objectives. For the Attention-DNN model, which has the best performance among them, we further provide the Shared Bottom version \cite{ESMM}, the MMoE version \cite{MMOE}, and the PLE version \cite{PLE}. More details can be found in Appendix \ref{sec:Implementation}.
We tuned the hyperparameters of each method, including the embedding size and number of layers, using the validation set. 
The user behavior sequence length is $300$, and the total length of text tokens, query tokens, and item IDs in URM is truncated to $1024$.

We use recall to evaluate the effectiveness of our proposed method. Denote the output item set as $\mathcal{P}$  and the ground truth as $\mathcal{G}$,  then recall $\text{R@}K= {|\mathcal{P} \cap \mathcal{G} |} / {|\mathcal{G}|} , \text{where} 
 |\mathcal{P}|=K$. 
We show results in Table \ref{table:taobao}.  
Due to the complex task relationships, it is difficult for a single approach to perform well on all objectives, which leads to multi-task training being sometimes effective and sometimes causing seesaw phenomenon.
In contrast, our proposed URM achieves the best performance in $6$ out of $9$ objectives and achieves a relative improvement of $\textbf{11.0\%}$ in average across all objectives.
Besides, for RQ (Retrieval with Query), there is a specifically designed three-tower model, which adds an independent tower for queries and reaches $0.822$ on R@1000. In contrast, our model achieves comparable results without any specific design. More results can be found in Appendix \ref{appendix:more_prompt_engineering}.

\vspace{-5pt}
\subsection{Ablation Study}
\vspace{-5pt}

\label{sec:ablation_study}

\textbf{Multi-Query Representation.}
 As shown in Figure \ref{fig:MTP}, as the number of query tokens increases, URM's ability to represent the target item set becomes stronger, resulting in better performance.

\textbf{Matrix Decomposition.}
As demonstrated in Table \ref{table:item_embedding},  the discriminability matrix $V_{\text{dis}}$ significantly enhances the model’s retrieval capabilities across all items, while the transferability matrix $V_{\text{trans}}$ boosts the model's ability to retrieve items unseen during training. This underscores the effectiveness of matrix decomposition and highlights the importance of different item representations.
Appendix \ref{appedix:visualize_item} further provides a visualization comparison of different item representations.

\begin{figure}[htbp] 
\vspace{-10pt}
\scriptsize
\begin{minipage}[!b]{0.48\textwidth} 
        \centering
    \caption{The effect of query token number $M$.}
    \includegraphics[width=1\linewidth]{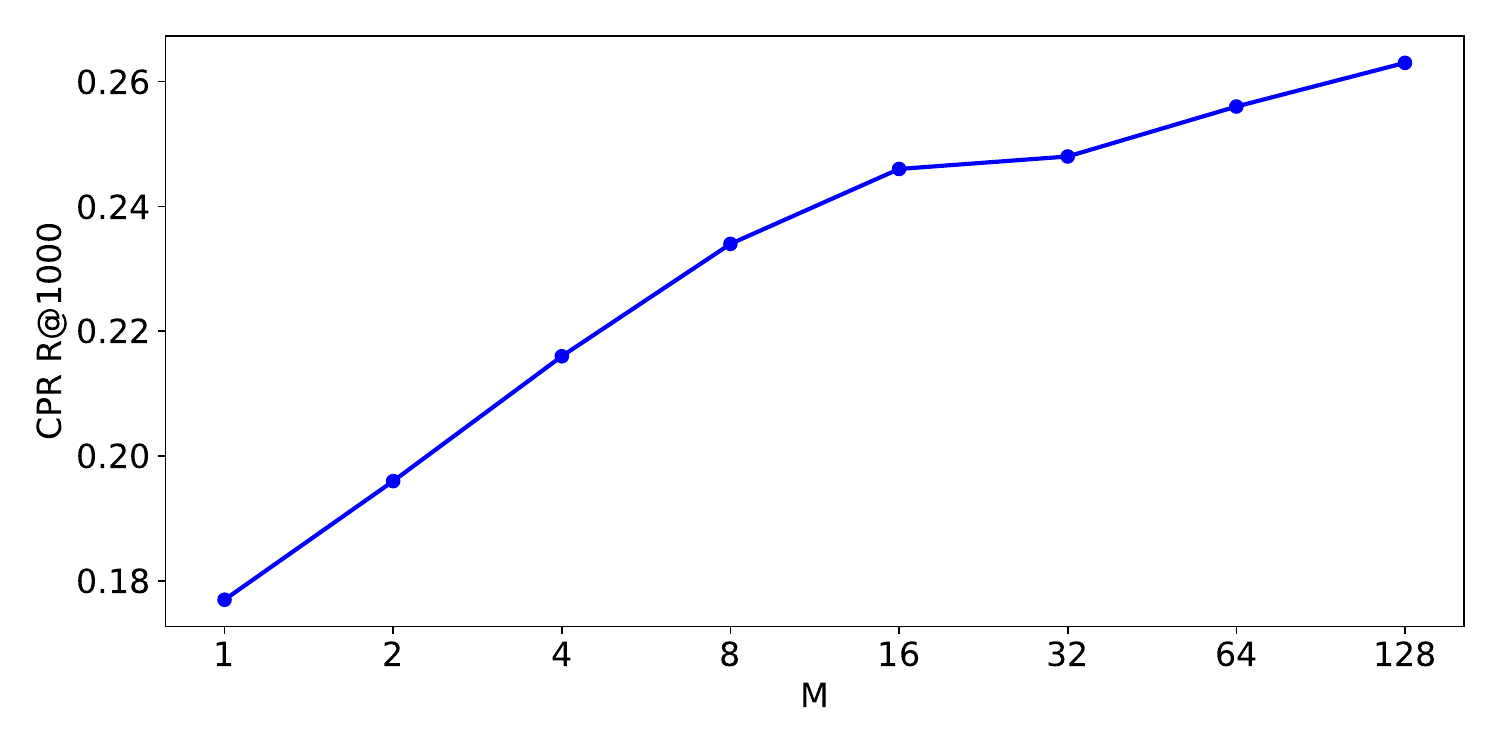}
    \label{fig:MTP}
\end{minipage}
\hspace{5pt}
\begin{minipage}[!b]{0.24\textwidth} 
    \vspace{-28pt}
\addtolength{\tabcolsep}{0pt}
\centering
\tabcaption{The effect of item representation $V$ in matrix decomposition on CPR R@1000.}
\begin{tabular}{@{}lll@{}}
\toprule
\textbf{$V$} & \begin{tabular}[c]{@{}l@{}}\textbf{All}\end{tabular} & \begin{tabular}[c]{@{}l@{}}\textbf{Unseen} \end{tabular} \\ \midrule
$V_{\text{dis}}$   & 0.256                                                    & 0.116                                                     \\
$V_{\text{trans}}$ & 0.152                                                     & 0.101                                                    \\
$V_{\text{dis}}+V_{\text{trans}}$                & \textbf{0.263}                                                    & \textbf{0.130}                                                       \\ \bottomrule
\end{tabular}
\label{table:item_embedding}
\end{minipage}
\hspace{5pt}
\begin{minipage}[!b]{0.24\textwidth} 
    \vspace{-30pt}
\addtolength{\tabcolsep}{8pt}
\centering
\tabcaption{The effect of sampling steps $T$.}
\begin{tabular}{@{}cc@{}}
\toprule
 T & Recall Precision \\
 \midrule
 \textbf{1}   &  0.2\% \\ 
 \textbf{2}   &  2.1\% \\ 
 \textbf{3}    &  41.7\% \\ 
 \textbf{4}    & 91.0\% \\
 \textbf{5}  & 91.1\%\\
\bottomrule
\end{tabular}
\label{table:retrieval}
\end{minipage} 
\vspace{-20pt}
\end{figure}

\textbf{Probabilistic Sampling.}
We use probability sampling to approximate the probability calculation for the entire candidate set. As the number of sample steps $T$ increases, the retrieval precision ($\frac{\text{R}@1000 \text{ w/ sampling}}{\text{R}@1000 \text{ w/o sampling}}$) continually increases and tends to converge when $T = 4$.

\vspace{-5pt}
\subsection{Universal Retrieval Performance}
\vspace{-5pt}
\label{sec:universal_retrieval_performance}
The performance of universal retrieval can be assessed from two perspectives: firstly, its effectiveness in handling objectives encountered during training (in the multi-task learning setting), and secondly, its transferability to objectives not seen during training (in the zero-shot learning setting). Only several examples are provided here, and more examples can be found in Appendix \ref{appendix:more_prompt_engineering} and \ref{appendix:zero_shot_prompt}.

\textbf{Multi-Task Learning.}
Although the candidates and generation are both different from pretrained LLMs, we observe that  URM remains sensitive to the text input. 
% Thus, we can change the distribution of the retrieval set by tuning the text.  
As shown in Table \ref{table:multi_scenario_prompt}, with objectives for specific scenarios, URM aligns more closely with the data distribution and achieves higher performance,
with a relative improvement of over $20\%$. 
As illustrated in Table \ref{table:serendipity_prompt}, 
after using the SR (Serendipity Retrieval) objective, the percent of the new category increased from $18.8\%$ to $46.2\%$, leading to a relative increase of $82.3\%$ in SR R@1000.

\begin{table}[htbp]
    \centering
\vspace{-5pt}
    \caption{Universal retrieval performance.
    }
    \vspace{-10pt}
    \subtable[multi-scenario retrieval]{
        \addtolength{\tabcolsep}{-4pt}
        \centering
        \scriptsize
        \begin{tabular}{llll}
        \toprule
        \textbf{Objective} & \textbf{\begin{tabular}[c]{@{}l@{}}RSA\\ R@1000\end{tabular}}   & \textbf{\begin{tabular}[c]{@{}l@{}}RSB\\ R@1000\end{tabular}}    & \textbf{\begin{tabular}[c]{@{}l@{}}RSC\\ R@1000\end{tabular}}    \\ \midrule
        CPR                           & 0.440          & 0.335          & 0.278          \\
        RSA                           & \textbf{0.530} & 0.409          & 0.240          \\
        RSB                           & 0.522          & \textbf{0.439} & 0.257          \\
        RSC                           & 0.444          & 0.327          & \textbf{0.362} \\ 
        \midrule
        \textbf{RI}                          & \textbf{+20.5\%}          & \textbf{+31.0\%}          & \textbf{+30.2\%} \\ 
        \bottomrule
        \end{tabular}
        \label{table:multi_scenario_prompt}
    }
    \subtable[serendipity retrieval]{
        \addtolength{\tabcolsep}{-4pt}
        \centering
        \scriptsize
        \begin{tabular}{@{}llll@{}}
        \toprule
        \textbf{Objective} &
          \textbf{\begin{tabular}[c]{@{}l@{}}CPR\\ R@1000\end{tabular}} &
          \textbf{\begin{tabular}[c]{@{}l@{}}SR\\ R@1000\end{tabular}} &
          \textbf{\begin{tabular}[c]{@{}l@{}}Percent of \\ New Category\end{tabular}} \\ \midrule
        CPR                    & \textbf{0.263} & 0.051            & 18.8\%            \\
        SR                    & 0.213          & \textbf{0.093}   & \textbf{46.2\%}   \\
        \midrule
        \textbf{RI} &          -       & \textbf{+82.3\%} & \textbf{+145.7\%} \\ \bottomrule
        \end{tabular}
        \label{table:serendipity_prompt}
    }
    \subtable[hybrid objectives]{
        \scriptsize
        \addtolength{\tabcolsep}{-4pt}
        \centering
        \scriptsize
        \begin{tabular}{@{}lll@{}}
        \toprule
        \textbf{Objective} & \textbf{\begin{tabular}[c]{@{}l@{}}RQ\\ R@1000\end{tabular}} & \textbf{\begin{tabular}[c]{@{}l@{}}Percent of \\ Long-tail Items\end{tabular}} \\ \midrule
        RQ        & 0.835          & 79.6\%          \\
        LIR       & 0.630          & 81.6\%          \\
        RQ $\times$ LIR & \textbf{0.836} & \textbf{82.4\%} \\ \bottomrule
        \end{tabular}
        \label{table:mix_SR_LIR}
    }
    \vspace{-10pt}
\end{table}

\begin{wrapfigure}{r}{0.42\textwidth}
\vspace{-37pt}
  \centering
    \includegraphics[width=1\linewidth]{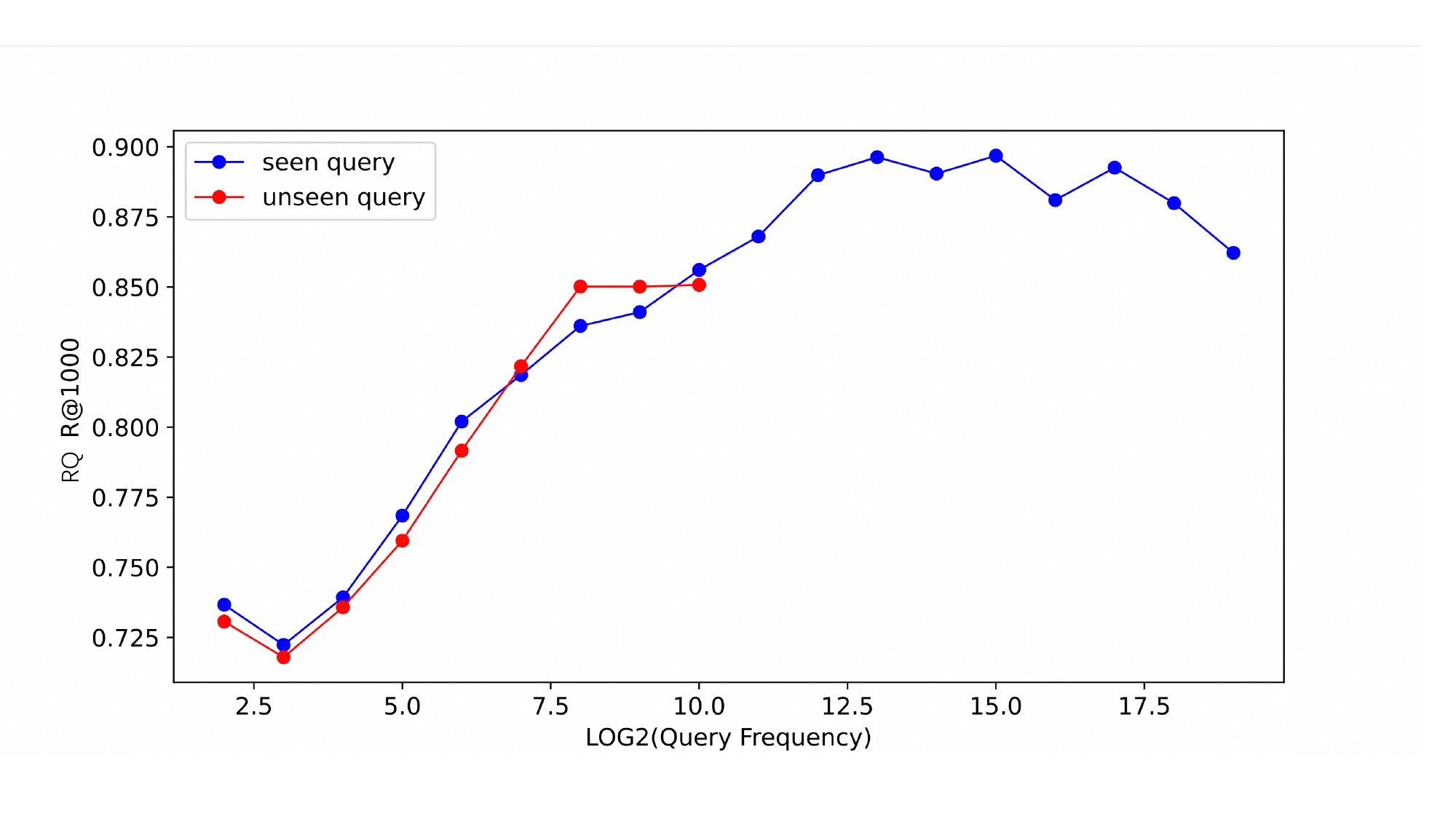}
    \vspace{-20pt}
    \caption{Performance on unseen queries.}
    \label{fig:query_frequency_comparison}
\vspace{-15pt}
\end{wrapfigure}
\textbf{Zero-shot Learning.}
We find that URM effectively adapts to hybrid objectives, such as long-tail item retrieval with a specific query. As illustrated in Table \ref{table:mix_SR_LIR}, combining RQ and LIR objectives not only improves relevance with the query but also increases the proportion of long-tail items.
We also examine URM's performance on seen and unseen queries, noting that unseen queries are often long-tail and have distinct distributions. To enable fair comparison, we plot performance across varying query frequencies. Figure \ref{fig:query_frequency_comparison} illustrates URM’s strong robustness to query input. 
% Meanwhile, tasks related to user needs can be expressed by modifying the query in the RQ objective and directly transferred to. Specific examples are provided in Appendix \ref{appendix:zero_shot_prompt}.

% TODO 更长时间
\vspace{-5pt}
\subsection{Online Results}
\vspace{-5pt}
\label{sec:online_results}
\begin{wraptable}{r}{0.32\textwidth}
\vspace{-20pt}
  \ \addtolength{\tabcolsep}{2pt}
\centering
\footnotesize
    \caption{ Online results from April 28, 2025 - May 14, 2025.}
\begin{tabular}{@{}cc@{}}
\toprule
\textbf{Metric} & \textbf{RI}  \\
\midrule
Revenue & +3.01\%  \\
CTR & +0.78\% \\
CVR & +1.24\%  \\
\#Long-tail Items  & +2.23\%  \\ 
\bottomrule
\end{tabular}
    \label{table:online}
\vspace{-5pt}
\end{wraptable}
We evaluate URM on our advertising platform using real traffic. 
A key challenge in retrieval is the discrepancy between offline and online metrics, where offline improvements often don't lead to online gains.
Traditional retrieval models require multiple steps to adjust online retrieval objectives and corresponding results. As a result, optimization for online performance heavily relies on human priors. In contrast, URM can dynamically 
adjust retrieval outputs based on input instruction, which means that we can conduct online A/B tests with various retrieval objectives and user feature combinations to optimize the online performance.
$4$ online metrics are utilized to assess performance: 
advertising revenue (the core metric for our advertising platform), CTR, CVR, and proportion of long-tail items.

In Table \ref{table:online}, we observe a $3.01\%$ growth in advertising revenue. Additionally, the CTR and CVR metric sees a rise of $0.78\%$ and $1.24\%$ respectively, suggesting that URM delivers more precise results to users. Notably, the growth for long-tail items surpasses the overall increase, highlighting that URM is more fair and friendly to long-tail items.

\vspace{-5pt}
\section{Related Work}
\vspace{-5pt}
\paragraph{Multi-Task Learning For Recommendation.}
Multi-task learning methods can be categorized into different parameter sharing methods \cite{CrossSwitch, UberNet, MTAN, ASTMT,Multi_centernet} and optimization strategies \cite{UW, GradNorm, PCGrad}, and have
been widely applied to recommender systems \cite{MMOE, ESMM, PLE}.
However, in practice, it is difficult to ensure that the performance of each task improves after multi-task learning, especially when the number of tasks continues to increase \cite{taskonomy, forkmerge}. 
Thus, it's often necessary to carefully design the model structure based on the data proportions and the task relationships.
This has also led to the isolated states among different recommendation problems, such as multi-scenario modeling \cite{jiang2022adaptivedomainnetworkmultidomain}, multi-objective modeling \cite{ESMM, cite:multi_objective_recommendation}, long-tail recommendation \cite{cite:longtail_recommendation}, etc. 
In addition, search tasks can be viewed as a specialized type of recommendation with explicit query constraints. 
Due to significant distribution discrepancies and limited model capacity, traditional recommendation models cannot simultaneously handle scenarios both including and excluding explicit inputs easily \cite{liu2024unified}.

\vspace{-5pt}
\paragraph{LLMs for Recommendation.}
Inspired by LLMs \cite{cite:GPT,cite:GPT2, GPT3, gpt4}, recent approaches treat recommendation tasks as natural language tasks, generating text results directly through prompting and in-context learning \cite{P5, cite:Uncovering_ChatGPT,IsGPTGood,InstructionFollowing,ilora,TallRec,bao2023bi}. 
However, in real systems, users typically have hundreds or even millions of behaviors, leading to at least tens of thousands of text tokens, which increases inference costs and decreases LLM performance.
Thus, some methods introduce a hierarchical structure that encodes each item's text or image information into item representations and feeds them into LLMs to generate a high-level user representation \cite{HLLM,Jia2024KnowledgeAF,ye2024harnessing,hou2022unisrec,MoRec,TransRec}. 
However, text-based recommendations lack discrimination because many items have similar images, titles, and categories, yet differ in recognition and exposure.
Another approach employs traditional ID embeddings to represent items \cite{CoLLM,E4SRec,LlaRA,Adapt_LLM_Collaborative,cite:IDGenRec}. However, these methods do not fully exploit the advantages of LLMs in recommender systems when dealing with numerous objectives and even unseen objectives, nor do they address the generation problem within a candidate set of tens of millions.

\vspace{-5pt}
\section{Conclusion}
\vspace{-5pt}
In this paper, we answer the questions of why and how LLMs can be used for recommender systems, and introduce Universal Retrieval Model (URM). The value of LLMs in recommender systems lies in their ability to unify various retrieval objectives into a cohesive input-output framework, enabling them to address multiple retrieval objectives simultaneously and prompt tuning the online performance directly. To facilitate the implementation of LLMs in real-world recommender systems, we propose multi-query representation to enable complex user-item relationships modeling, introduce matrix decomposition to improve model learnability, discriminability, and transferability, and incorporate probabilistic sampling to reduce computation costs. Ultimately, URM can adaptively generate a retrieval set from tens of millions of candidates based on any given objective while maintaining latency within tens of milliseconds, achieving significant gains in both offline and online metrics.

%%%%%%%%%%%%%%%%%%%%%%%%%%%%%%%%%%%%%%%%%%%%%%%%%%%%%%%%%%%%

\bibliographystyle{plain}
\bibliography{neurips_2025}

\newpage

\appendix

\newpage
\appendix
\onecolumn
\section{Appendix}
\subsection{Limitations}
\label{appendix:limitation}
% We have made significant efforts to deploy LLMs in industrial recommender systems, including the introduction of multi-query representation and probabilistic sampling techniques. However, we have to acknowledge that LLMs still increase the costs of training and inference. Therefore, in our online version, to ensure that training data for each day is processed within $24$ hours, we aggregate all user actions on an hourly basis and then sample only 5\% of this aggregated data for SFT.
% During our inference phase, we utilize a $10$-minute window for asynchronous inference.
% These are practical trade-offs between effectiveness and efficiency. The increase in computational cost is a limitation of LLMs themselves. However, our online experiments have demonstrated that they still hold significant value in recommender systems.
\textbf{Computational Cost.} In our efforts to deploy LLMs within industrial recommender systems, we've incorporated strategies such as multi-query representation and probabilistic sampling techniques. Despite these advancements, it is important to acknowledge that LLMs inherently increase the costs associated with training and inference compared to traditional recommendation models. Consequently, in our online implementation, we ensure that the daily training data is processed within a 24-hour timeframe by aggregating user actions on an hourly basis and subsequently sampling only 5\% of this aggregated data for supervised fine-tuning (SFT). During the inference phase, we apply a 10-minute window for asynchronous processing. These measures reflect necessary trade-offs between effectiveness and efficiency. While the computational expense is a known limitation of LLMs, our online experiments affirm their substantial value in enhancing recommender systems.

\textbf{Task Versatility.}
We discover that while URM can adapt to certain new objectives, these are often closely linked to the original training objectives. When faced with entirely new objectives, it's challenging for URM to adjust its output based on textual input. 
 The solution to this problem is to increase the amount of training data, enabling URM to learn more mappings from retrieval objectives to results. 
For example, by incorporating search data, we aim to utilize user behavior data associated with diverse queries to enhance URM's ability to generalize to new prompts.

\subsection{Broader Impacts}
\label{appendix:broader_impacts}
URM integrates the vast world knowledge encapsulated in LLMs to offer personalized retrieval to users from a human-centered, interpretable perspective. It is also beneficial in promoting long-tail items, as evidenced by our online experimental results, which demonstrate that our approach is more accommodating to these long-tail items. Consequently, this contributes to more equitable market competition and may help  mitigate monopolistic dominance.

% \textbf{Negative Societal Impacts.} Despite the positive aspects, there are potential negative societal impacts to consider. There is a risk that increased dependence on algorithm-driven recommendations may lead to reduced exposure to diverse opinions or products, which underscores the importance of ensuring recommender systems are designed to promote diversity and inclusivity in the options they present. With this in mind, we explore and provide preliminary evidence in Appendix \ref{appendix:category_diversity} that we can enhance the diversity of the retrieved set by adjusting the objective.

\subsection{Theoretical Guarantee for the Probability Sampling Algorithm}
\label{proof}
A potential assumption of Algorithm \ref{algorithm} is that items close in the $W$ matrix, specifically those that become neighbors, will receive similar scores from URM. A theoretical guarantee is as follows:

\begin{theorem}
    Assuming that the representations of two items $v_1$ and $v_2$ are very similar, i.e., $|| W_{v_1}-W_{v_2}|| \leq \epsilon$. Meanwhile, we apply  bound constraints to the representation $F(u,o)$,
\begin{equation}
     {\overline{F}}(u,o) = \dfrac{{F}(u, o)}{\max(||{F}(u, o)||/B, 1)}
\end{equation}
where  scaling adjustments will only be applied to $F(u, o) $ if its norm exceeds $B$,
then we have  
\begin{equation}
    || \max(W_{v_1}^T \mathbf{\overline{F}}(u,o))-\max(W_{v_2}^T \mathbf{\overline{F}}(u,o))||  \leq \epsilon B 
\end{equation}
\end{theorem}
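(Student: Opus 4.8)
The plan is to reduce the claim to two elementary facts: that the max over the $M$ query scores is $1$-Lipschitz with respect to the $\ell_\infty$ norm on those scores, and that each individual score difference is controlled by Cauchy--Schwarz together with the norm constraint. First I would fix the user--objective pair and write $\overline{F}(u,o) \in R^{D \times M}$ in terms of its columns $f_1, \dots, f_M \in R^D$, so that $\max(W_v^T \overline{F}(u,o)) = \max_{1 \le j \le M} W_v^T f_j$ is a scalar for each item $v$; in particular the outer $\|\cdot\|$ in the statement is just absolute value. Setting $a_j = W_{v_1}^T f_j$ and $b_j = W_{v_2}^T f_j$, the quantity to bound becomes $|\max_j a_j - \max_j b_j|$.

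The key step is the observation that the query index attaining the maximum may differ for $v_1$ and $v_2$, so one cannot simply subtract the two maxima at a shared index. Instead I would invoke the standard estimate $|\max_j a_j - \max_j b_j| \le \max_j |a_j - b_j|$, which I would prove by letting $j^\star$ attain $\max_j a_j$, bounding $a_{j^\star} \le b_{j^\star} + |a_{j^\star} - b_{j^\star}| \le \max_j b_j + \max_j |a_j - b_j|$, and then repeating with the roles of $a$ and $b$ exchanged. This reduces the problem to controlling $\max_j |a_j - b_j|$.

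Finally I would bound each entry via $|a_j - b_j| = |(W_{v_1} - W_{v_2})^T f_j| \le \|W_{v_1} - W_{v_2}\|\,\|f_j\|$ by Cauchy--Schwarz. The first factor is at most $\epsilon$ by hypothesis, and the normalization $\overline{F}(u,o) = F(u,o)/\max(\|F(u,o)\|/B, 1)$ forces $\|\overline{F}(u,o)\| \le B$, whence $\|f_j\| \le B$ for every column (a column norm never exceeds the Frobenius or the spectral norm of the matrix, so the argument is insensitive to which matrix norm is intended). Combining the three steps gives $\max_j |a_j - b_j| \le \epsilon B$ and hence the claim. The only genuine obstacle is the first reduction: the argmax over queries can shift between $v_1$ and $v_2$, and it is precisely the Lipschitz property of the max operator that handles this shift; once that is in place, the remainder is a direct application of Cauchy--Schwarz and the bound constraint.
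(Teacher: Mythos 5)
Your proposal is correct and takes essentially the same route as the paper's proof: the paper handles the shift of the maximizing query index by a case analysis ($i=j$ versus $i\neq j$) together with a sandwich inequality
$W_{v_1}^T \overline{F}_j(u,o) - W_{v_2}^T \overline{F}_j(u,o) \leq \max(W_{v_1}^T \mathbf{\overline{F}}(u,o))-\max(W_{v_2}^T \mathbf{\overline{F}}(u,o)) \leq W_{v_1}^T \overline{F}_i(u,o) - W_{v_2}^T \overline{F}_i(u,o)$,
which is exactly the content of your Lipschitz-of-max lemma, and then concludes by Cauchy--Schwarz and the per-column bound $\|\overline{F}_j(u,o)\|\leq B$, just as you do. Your packaging via the standard inequality $|\max_j a_j - \max_j b_j| \leq \max_j |a_j - b_j|$ is a slightly cleaner, unified statement of the same argument (and your remark that column norms are dominated by the matrix norm tidies an ambiguity the paper resolves implicitly), but the mathematical content is identical.
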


\begin{proof}
    First, we provide the proof under $M=1$.

\begin{align*}
    & || \max(W_{v_1}^T \mathbf{\overline{F}}(u,o))-\max(W_{v_2}^T \mathbf{\overline{F}}(u,o))|| \\ 
    = &|| W_{v_1}^T {\overline{F}}(u,o)-W_{v_2}^T {\overline{F}}(u,o)|| \\
    \leq 
    &|| W_{v_1}-W_{v_2}|| \cdot || {\overline{F}}(u,o) || \\
    \leq & \epsilon B \\
\end{align*}
This means that as long as the norm of ${\overline{F}}(u,o)$ remain within a constant range, the scores between neighboring items will be similar. 
Then we extend the conclusion to the case when $M>1$. 

Without loss of generality, let $\max(W_{v_1}^T \mathbf{\overline{F}}(u,o))=W_{v_1}^T \overline{F}_i(u,o)$ and $\max(W_{v_2}^T \mathbf{\overline{F}}(u,o))=W_{v_2}^T \overline{F}_j(u,o)$, where $1\leq i,j \leq M$.
By applying bound constraints to the representation, we have 
$||\overline{F}_i(u,o)||\leq B, ||\overline{F}_j(u,o)||\leq B$.

When $i=j$, we have 
\begin{equation}
    || \max(W_{v_1}^T \mathbf{\overline{F}}(u,o))-\max(W_{v_2}^T \mathbf{\overline{F}}(u,o))|| \leq || W_{v_1}-W_{v_2}|| \cdot || {\overline{F}_i}(u,o) || \leq \epsilon B 
\end{equation}

When $i\neq j$, we have
\begin{equation}
    \max(W_{v_1}^T \mathbf{\overline{F}}(u,o))-\max(W_{v_2}^T \mathbf{\overline{F}}(u,o)) = W_{v_1}^T \overline{F}_i(u,o) - W_{v_2}^T \overline{F}_j(u,o)
\end{equation}
By the properties of the max function, we have $W_{v_1}^T \overline{F}_i(u,o) \geq W_{v_1}^T \overline{F}_j(u,o)$ and $W_{v_2}^T \overline{F}_j(u,o) \geq W_{v_2}^T \overline{F}_i(u,o)$, thus we have

\begin{equation}
    W_{v_1}^T \overline{F}_j(u,o) - W_{v_2}^T \overline{F}_j(u,o) \leq  \max(W_{v_1}^T \mathbf{\overline{F}}(u,o))-\max(W_{v_2}^T \mathbf{\overline{F}}(u,o)) \leq W_{v_1}^T \overline{F}_i(u,o) - W_{v_2}^T \overline{F}_i(u,o)
\end{equation}

According to the properties of the norm, we have

\begin{align*}
    &||\max(W_{v_1}^T \mathbf{\overline{F}}(u,o))-\max(W_{v_2}^T \mathbf{\overline{F}}(u,o))|| \\ \leq & \max \{
    || W_{v_1}^T \overline{F}_j(u,o) - W_{v_2}^T \overline{F}_j(u,o) ||,
    || W_{v_1}^T \overline{F}_i(u,o) - W_{v_2}^T \overline{F}_i(u,o) ||\} \\
    \leq & \max \{
    || W_{v_1} - W_{v_2}|| || \overline{F}_j(u,o) ||,
    || W_{v_1} - W_{v_2}|| || \overline{F}_i(u,o) ||
    \} 
     \\
    \leq & \max \{
    \epsilon B,
    \epsilon B
    \} \\
    = &\epsilon B \\
\end{align*}
\end{proof}

Here, $B$ acts as a control hyperparameter for the model’s complexity. A larger $B$ leads to larger model complexity, but also larger inconsistency in $W^T \mathbf{\overline{F}}(u,o)$ between neighbors, thus reducing retrieval precision.   In embedding-based retrieval, user representations are often normalized \cite{EBR_facebook}, which is somewhat similar to the situation where $B = 1$, resulting in very low model expressiveness but very high retrieval precision, suitable for various ANN retrieval methods. In URM, we have $B = 100$, and experiments (Table \ref{table:retrieval}) have shown that retrieval precision still remains high under this setting after multiple samplings.

\newpage
\section{Implementation Details}

\subsection{Dataset Details}
\label{appendix:dataset_details}
Our industrial dataset contains hundreds of millions of samples and more than one billion distinct items in user behavior sequences. 
The candidate set contains tens of millions of items.\footnote{Due to data security concerns, specific numbers cannot be open.}
We use samples from day $1$ to day $T$ for training and the samples of day $T+1$ for testing.
We  construct datasets for typical  objectives, listed as follows:
\begin{itemize}
\setlength{\itemsep}{0pt}
  \item   \textbf{C}lick \textbf{P}rediction \textbf{R}etrieval (\textbf{CPR}):  Positive samples are click behaviors.
  \item    \textbf{R}etrieval for \textbf{S}cene \textbf{A} (\textbf{RSA}), \textbf{S}cene \textbf{B} (\textbf{RSB}), \textbf{S}cene \textbf{C} (\textbf{RSC}): Positive samples are the items exposed to users in $3$ scenarios with distribution shift.
  \item   \textbf{S}erendipity  \textbf{R}etrieval (\textbf{SR}): Retrieve new items for users to mitigate the information cocoon effect. Positive samples are real click behaviors, whose categories have not appeared in user behaviors.
  \item    \textbf{L}ong-term \textbf{R}etrieval (\textbf{LR}): Retrieve items based on long-term behaviors of users. Positive samples are the users' real click behaviors, whose categories only appear in the long-term user behavior sequence.
  \item    \textbf{L}ong-tail \textbf{I}tem \textbf{R}etrieval (\textbf{LIR}): 
    Retrieve long-tail items to mitigate the bias of the recommender system. 
    Positive samples are the items that users have clicked, and the popularity of which is below a certain threshold.
  \item     \textbf{P}urchase \textbf{P}rediction \textbf{R}etrieval (\textbf{PPR}): Positive samples are the items that the users have purchased.
  \item   \textbf{R}etrieval with \textbf{Q}uery (\textbf{RQ}): Given a query, retrieve the most relevant items that the user is most likely to be interested in. Positive samples are the user's click behaviors under a certain query.
\end{itemize}

\subsection{Model Implementation Details}
\label{sec:Implementation}

\textbf{Two-tower Model.} 
Both the user-side features and the item-side features are flattened to $2$-dimensional,  and then fed into $3$-layer MLP networks with layers $[256, 128, 64]$ to obtain the final $64$-dimensional user representations and item representations. The user representations will not be normalized, while the item representations will be normalized.

\textbf{Transformer-based Model.}  After adding the position embedding, the user behavior sequence embedding, task embedding, and query embedding are concatenated and fed into a $5$-layer Transformer network, where each Transformer layer has $4$ heads and the dimension of the FFN layer is $4$ times larger than the input dimension. The embedding at the last position is taken as the user embedding. The item side follows the same as the two-tower model: features are flattened and then processed through a $3$-layer MLP network with dimensions $[256, 128, 64]$ to obtain the final $64$-dimensional embedding, which is then normalized.

\textbf{Attention-DNN.} The user side includes multiple aggregated features and $3$ sequence features of different lengths. The sequence features are used to compute cross-attention with the item features \cite{zhou2018deep}. The item side also includes multiple features. After flattening all features, they are collectively fed into a fully connected network with layers $[256, 128, 64, 1]$ to obtain the final score.
Due to the increase in computation cost for each item, we build an HNSW index \cite{HNSW} on the item representations for faster inference following \cite{zhu2018learning, chen2022approximate}.

\textbf{Attention-DNN + Shared Bottom.} Retain the attention module and all features in Attention-DNN, and use different MLPs for modeling each task separately.

\textbf{Attention-DNN + MMoE.} Retain the attention module and all features in Attention-DNN, and introduce $4$ MLPs to generate the middle representations. The task embedding is fed into the gate network to obtain combination weights, which are then used to adaptively combine the outputs from the MLP.

\textbf{Attention-DNN + PLE.} Retain the attention module and all features in Attention-DNN and adopt $4$ MLPs for shared module and task-specific module separately to generate the middle representations with the dim of $64$. These middle representations of the $4$ shared modules and each task-specific module are then merged using a gate network and fed into each task's $2$-layer MLPs to obtain the final score.

\textbf{URM.} 
% URM first employs $2$ linear layers to project item semantic embeddings and item ID embeddings into $128$-dimensional embeddings. These embeddings are then summed and passed through an RMSNorm layer, a linear layer, and a normalization layer to obtain the final $128$-dimensional multimodal item embeddings. An additional linear layer will map the multimodal item embeddings into a $4096$-dimensional tensor. 
The item IDs are first mapped to item embeddings using a distributed hashtable, and then an MLP is used to further map the embeddings into a $4096$-dimensional tensor. 
Simultaneously, the other tokens are also transformed into this dimension by the token embedding table. As depicted in Figure \ref{fig:arch}, once merged with positional embeddings, the input embeddings are fed into the LLM backbone. The hidden state from the final transformer layer will subsequently pass through 2 MLP layers ($U$ and $V$) to yield the final item probability. Between MLP layers, we add RMSNorm \cite{zhang2019rootmeansquarelayer}  to enhance the model expressivenss.

% \begin{figure}[!b]
%     \centering
%     \vspace{-15pt}
%     \includegraphics[width=1\linewidth]{}
%     \vspace{-20pt}
%         \caption{The implementation of $W$ matrix.
%     }
%     \label{fig:id_embedding}
% \end{figure}

\subsection{Training Details}
\label{sec:Hyperparameters}

For URM, we use AdamW with an initial learning rate of $2e^{-5}$, weight decay of $0.05$, batch size of $8192$, and cosine learning rate decay.
Each batch consists of $10,000$ negative examples, which are sampled from the item candidates according to the $3/4$ power of their occurrence frequency. The sampled training dataset is trained for a single epoch.
To preserve the pre-trained knowledge in LLM, we follow transfer learning \cite{DAN}
and set the learning rate of the pre-trained layers to be $1/10$  of the learning rate of the other layers.
Using Qwen-7B as the LLM backbone on $64$ NVIDIA H20 GPUs, it takes $25$ hours to process $50$ million training examples.
Due to the large volume of data in recommender systems and the high training cost of LLMs, we perform a 5\% sampling of the data when training URM. Meanwhile, we initialize $V_{\text{dis}}$ using the ID Embedding from the two-tower model.

\subsection{Inference Efficiency of URM}
\label{appendix:inference_efficiency}
In the inference stage, our model architecture differs from conventional LLMs in two aspects: (1) item IDs in the input sequence, and (2) probabilistic sampling in the vast item space. 
% Both modules are well-established within ID-based recommendation models and leverage mature, high-performance technologies. 
Specifically, for input item IDs, we utilize distributed hashtable lookup technology to reduce the embedding retrieval latency to within a few milliseconds. For the item sample module, we propose Algorithm \ref{algorithm}, enabling efficient sampling within a ten-million-item database in under 10 milliseconds.
As a result, the negative impact of these two additional modules on performance is relatively minor, resulting in our model being nearly equivalent to an LLM inference with an output token length of $1$. This enables seamless utilization of existing inference frameworks (e.g., vLLM \cite{vllm}) and acceleration techniques (including batching, key-value caching, FlashAttention \cite{dao2022flashattention, dao2023flashattention2}, etc.).

% We evaluated inference speed on NVIDIA H20 GPUs using Qwen-7B as the base model. With batch size=$4$ and input token length about $1024$, our system achieves batch latency of $\sim230$ms, single-GPU QPS of $20$, and daily throughput of $1.7$ million requests per GPU.
The 99th percentile latency (p99) for various model sizes at different input lengths on NVIDIA H20 GPUs is presented in Table \ref{table:latency_p99}. The results demonstrate that by reducing the model size, efficiency can be further enhanced, which in turn allows for the processing of longer input lengths of tokens. In contrast, generative retrieval methods based on semantic IDs, which require auto-regressively generating multiple tokens, result in online latencies on the order of  $T$.
% seconds and QPS less than $1$, significantly underperforming compared to our approach.

\begin{table}[htbp]
\addtolength{\tabcolsep}{-3pt}
\centering
\footnotesize
\caption{P99 Latency (ms) for various model sizes at different input lengths.}
\begin{tabular}{@{}lccc@{}}
\toprule
\diagbox{\textbf{Model Size}}{\textbf{Input Length}} & \textbf{256} & \textbf{1024} & \textbf{4096} \\ \midrule
0.5B         & 7.7 & 18   & 65   \\
1.5B         & 18  & 50   & 197  \\
3B           & 34  & 99   & 404  \\
7B           & 63  & 233  & 947  \\ \bottomrule
\end{tabular}
\label{table:latency_p99}
\vspace{-10pt}
\end{table}

%%%%%%%%%%%%%%%%%%%%%%%%%%%%%%%%%%%%%%%%%%%%%%%%%%%%%%%%%%%%%%%%%%%%%%%%%%%%%%%
%%%%%%%%%%%%%%%%%%%%%%%%%%%%%%%%%%%%%%%%%%%%%%%%%%%%%%%%%%%%%%%%%%%%%%%%%%%%%%%
\newpage
\section{More Experimental Results}

\subsection{More Experiment Results on Public Dataset}
\label{appendix:more_public_k}
As demonstrated in Table \ref{table:public_k10}, URM exhibits superior performance over the most robust baseline, achieving average relative improvements of $\textbf{49\%}$ in HR@10 and $\textbf{37\%}$ in NDCG@10 on $4$ public datasets, respectively.

\begin{table}[htbp]
\addtolength{\tabcolsep}{-3pt}
\centering
\footnotesize
\caption{Performance on $4$ public datasets($K = 10$).}
\begin{tabular}{@{}l|cc|cc|cc|cc@{}}
\toprule
\multicolumn{1}{c|}{\multirow{2}{*}{\textbf{Methods}}} & \multicolumn{2}{c|}{\textbf{Sports}} & \multicolumn{2}{c|}{\textbf{Beauty}} & \multicolumn{2}{c|}{\textbf{Toys}} & \multicolumn{2}{c}{\textbf{Yelp}} \\ \cmidrule{2-9} 
\multicolumn{1}{c|}{}                                  & HR@10             & NDCG@10          & HR@10             & NDCG@10          & HR@10            & NDCG@10         & HR@10           & NDCG@10         \\ \midrule
HGN                                                    & 0.0313            & 0.0159           & 0.0512            & 0.0266           & 0.0497           & 0.0277          & 0.0326          & 0.0159          \\
GRU4Rec                                                & 0.0204            & 0.0110           & 0.0283            & 0.0137           & 0.0176           & 0.0084          & 0.0263          & 0.0134          \\
Caser                                                  & 0.0194            & 0.0097           & 0.0347            & 0.0176           & 0.0270           & 0.0141          & 0.0253          & 0.0129          \\
BERT4Rec                                               & 0.0191            & 0.0099           & 0.0347            & 0.0170           & 0.0203           & 0.0099          & 0.0090          & 0.0045          \\
FDSA                                                   & 0.0288            & 0.0156           & 0.0407            & 0.0208           & 0.0381           & 0.0189          & 0.0276          & 0.0136          \\
SASRec                                                 & 0.0350            & 0.0192           & 0.0605            & 0.0318           & 0.0698           & 0.0318          & 0.0274          & 0.0136          \\
S3-Rec                                                 & 0.0385            & 0.0204           & 0.0647            & 0.0327           & 0.0700           & 0.0376          & 0.0341          & 0.0168          \\
E4SRec                                                 & 0.0410            & 0.0237           & 0.0758            & 0.0435           & 0.0798           & 0.0479          & 0.0418          & 0.0238          \\
P5                                                     & 0.0460            & 0.0336           & 0.0664            & 0.0429           & 0.0709           & {\ul 0.0587}    & {\ul 0.0703}    & {\ul 0.0445}    \\
TIGER                                                  & 0.0400            & 0.0225           & 0.0648            & 0.0384           & 0.0712           & 0.0432          & -               & -               \\
IDGenRec                                               & {\ul 0.0574}      & {\ul 0.0372}     & {\ul 0.0814}      & {\ul 0.0541}     & {\ul 0.0870}     & 0.0551          & 0.0578          & 0.0404          \\
COBRA                                                  & 0.0434            & 0.0257           & 0.0725            & 0.0456           & 0.0781           & 0.0515          & -               & -               \\
URM                                                    & \textbf{0.1049}   & \textbf{0.0590}  & \textbf{0.1225}   & \textbf{0.0766}  & \textbf{0.1221}  & \textbf{0.0726} & \textbf{0.0866} & \textbf{0.0558} \\ \midrule
\textbf{RI}                                                     & +82.8\%           & +58.6\%          & +50.5\%           & +41.6\%          & +40.3\%          & +23.7\%         & +23.2\%         & +25.4\%         \\ \bottomrule
\end{tabular}
\label{table:public_k10}
\vspace{-10pt}
\end{table}

\subsection{Experiments with More LLM Backbones}
\label{appendix:more_backbone}

\begin{table}[htbp]
\addtolength{\tabcolsep}{-3pt}
\centering
\footnotesize
\caption{Comparison between different LLM backbones on public dataset.}
\begin{tabular}{@{}l|cccc@{}}
\toprule
\multicolumn{1}{c|}{\multirow{2}{*}{\textbf{Methods}}} & \multicolumn{4}{c}{\textbf{Beauty}} \\ \cmidrule{2-5} 
\multicolumn{1}{c|}{}                                  & HR@5             & NDCG@5          & HR@10             & NDCG@10       \\ \midrule
E4SRec                                                 & 0.0525            & 0.0360           & 0.0758            & 0.0435            \\
URM(Qwen-7B)                                                    & 0.0929            & 0.0671           & 0.1225            & 0.0766            \\
URM(LLaMA2-13B)                                                    & 0.0905            & 0.0607           & 0.1339            & 0.0747            \\ \bottomrule
\end{tabular}
\label{table:public-LLM-backbone}
\end{table}

\textbf{Public Dataset.} We conduct evaluations using the LLaMA2-13B model, 
as utilized in E4SRec, on the Amazon Beauty dataset as an example to isolate the effects of the backbone and provide a more aligned comparison. The results are presented in Table \ref{table:public-LLM-backbone}, demonstrating that performance improvements are primarily attributed to our framework rather than deriving from a more advanced backbone.

\textbf{Industrial Dataset.} We conduct experiments on Qwen-1.8B, LLaMA2-13B \cite{Llama2}, and DeepSeek-V2-Lite(16B) \cite{DeepSeekV2-Lite} on the industrial-scale dataset.
As shown in Table \ref{table:taobao-LLM-backbone}, alternative backbone models achieve performance comparable to Qwen-7B. This indicates that the performance improvement is not tied to any specific backbone; rather, it benefits from the versatility and general applicability of our URM framework design.

\begin{table*}[htbp]
\addtolength{\tabcolsep}{5pt}
\centering
\footnotesize
\caption{Comparison between different LLM backbones on industrial dataset (metric: R@1000).}
\resizebox{\textwidth}{!}{
\begin{tabular}{@{}llllllllllll@{}}
\toprule
\textbf{Methods}       & \textbf{CPR} & \textbf{RSA} & \textbf{RSB} & \textbf{RSC} & \textbf{SR} & \textbf{LR} & \textbf{LIR}  & \textbf{PPR} & \textbf{RQ} & \textbf{AVG} \\ \midrule
Qwen-7B  &  \textbf{0.263} &
  \textbf{0.530} &
  \textbf{0.439} &
  0.362 &
  0.093 &
 \textbf{0.285} &
  \textbf{0.240 }&
  0.581 &
  0.835 &
  0.403   \\
Qwen-1.8B & 0.255 & 0.529 & 0.434 & 0.351 & 0.081 & 0.243  & 0.230  & 0.572 & 0.835  & 0.392 \\
LLaMA2-13B & \textbf{0.263} & 0.527 & 0.434 & \textbf{0.37} & \textbf{0.107} & 0.276 & 0.228 & \textbf{0.582} & 0.845 & \textbf{0.404} \\
DeepSeek-V2-Lite(16B) & 0.258 & 0.514 & 0.413 & 0.357 & 0.102 & 0.264 & 0.214 & 0.569 & \textbf{0.851} & 0.394 \\
\bottomrule
\end{tabular}
}
\label{table:taobao-LLM-backbone}
\end{table*}

\subsection{More Experiments on  Multi-Task Learning}
\label{appendix:more_prompt_engineering}

\textbf{STL vs. MTL on URM.} Table \ref{table:taobao_mtl} gives the performance degradation of single-task learning compared to multi-task learning.
Compared to other methods, URM performs better in multi-task settings,
indicating that URM is less prone to seesaw phenomenon. 
Further, Figure \ref{fig:ablation_MTL} compares the performance of single-task and multi-task learning with different amounts of labeled data.
The performance of URM  on the single CPR task can also continuously approach that of multi-task learning as the amount of task-specific data increases. 
However, URM with multi-task learning can converge faster to better performance using much less task-specific data, which has significant value for recommendation scenarios where a large amount of task-specific data is not available.

\begin{table}[htbp]
\addtolength{\tabcolsep}{0.5pt}
\centering
\scriptsize
\caption{Performance on the industrial dataset (metric: R@1000).}
\begin{tabular}{@{}llllllllllll@{}}
\toprule
\textbf{Model} &
  \textbf{Learning Method} &
  \textbf{CPR} &
  \textbf{RSA} &
  \textbf{RSB} &
  \textbf{RSC} &
  \textbf{SR} &
  \textbf{LR} &
  \textbf{LIR} &
  \textbf{PPR} &
  \textbf{RQ} &
  \textbf{AVG} \\ \midrule
\multirow{2}{*}{URM} & STL & 0.137          & 0.226          & 0.219          & 0.089           & 0.042         & 0.089           & 0.107          & 0.427          & 0.668   & 0.223 \\
 & 
  MTL &
  \textbf{0.263} &
  \textbf{0.530} &
  \textbf{0.439} &
  \textbf{0.362} &
  \textbf{0.093} &
  \textbf{0.285} &
  \textbf{0.240} &
  \textbf{0.581} &
  \textbf{0.835} &
  \textbf{0.403} \\ \bottomrule
\end{tabular}
\label{table:taobao_mtl}
\vspace{-5pt}
\end{table}

\begin{figure}[!h]
    \centering
    \vspace{-5pt}
    \includegraphics[width=0.6\linewidth]{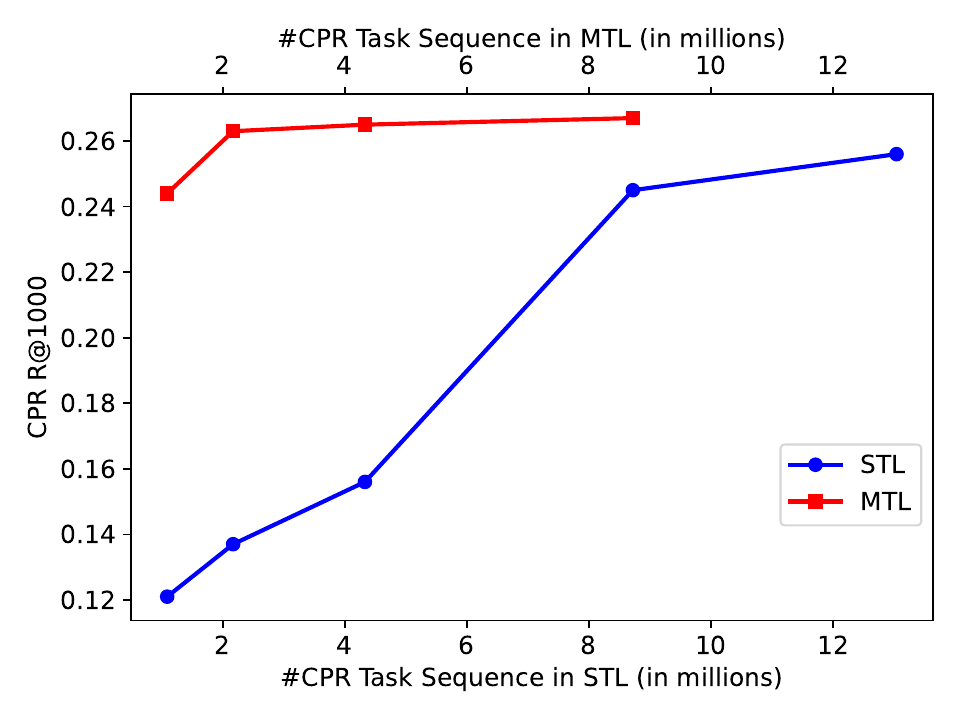}
    \vspace{-15pt}
    \caption{The effectiveness of multi-task learning.}
    \label{fig:ablation_MTL}
\vspace{-5pt}
\end{figure}

\textbf{Long-tail Item Retrieval.} As shown in Table \ref{table:longtail_prompt}, after using the LIR objective, the proportion of long-tail items in the output item set increases by $49.5\%$ relatively.

\begin{table}[!h]
\vspace{-5pt}
\addtolength{\tabcolsep}{5pt}
\centering
\footnotesize
\caption{Performance on Long-tail Item Retrieval.}
\begin{tabular}{@{}llll@{}}
\toprule
\textbf{Objective} &
  \textbf{\begin{tabular}[c]{@{}l@{}}CPR\\ R@1000\end{tabular}} &
  \textbf{\begin{tabular}[c]{@{}l@{}}LIR\\ R@1000\end{tabular}} &
  \textbf{\begin{tabular}[c]{@{}l@{}}Percent of \\ Long-tail Items\end{tabular}} \\ \midrule
CPR                   & \textbf{0.263} & \textbf{0.240}          & 54.6\%            \\
LIR                   & 0.202          & \textbf{0.240}   & \textbf{81.6\%}   \\
\midrule
\textbf{RI} &          -       & +0\% & \textbf{+49.5\%} \\ \bottomrule
\end{tabular}
\label{table:longtail_prompt}
\end{table}

\textbf{Long-term Retrieval.} As shown in Table \ref{table:longterm_prompt}, employing the LR objective results in a relative increase of $96.6\%$ in the proportion of items within the output set that align with the user's long-term interests 
and a relative increase of $36.4\%$ in LR R@1000.

\begin{table}[!h]
\addtolength{\tabcolsep}{5pt}
\centering
\footnotesize
\caption{Performance on Long-term Retrieval.}
\begin{tabular}{@{}llll@{}}
\toprule
\textbf{Objective} &
  \textbf{\begin{tabular}[c]{@{}l@{}}CPR\\ R@1000\end{tabular}} &
  \textbf{\begin{tabular}[c]{@{}l@{}}LR\\ R@1000\end{tabular}} &
  \textbf{\begin{tabular}[c]{@{}l@{}}Percent of Long-\\term Interest\end{tabular}} \\ \midrule
CPR                   & \textbf{0.263} & 0.209          & 32.6\%           \\
LR                   & 0.149          & \textbf{0.285}  & \textbf{64.1\%}  \\
\midrule
\textbf{RI} & -               & \textbf{+36.4\%} & \textbf{+96.6\%} \\ \bottomrule
\end{tabular}
\label{table:longterm_prompt}
\end{table}

\label{appendix:category_diversity}
\textbf{Category Diversity.} In some of the training data, we inject the number of categories in the target set into the objective as shown below. 
 \begin{tcolorbox}[colback=blue!2!white,leftrule=2.5mm,size=title]
\label{obs3}
 Inputs: Please retrieve the top $\{K\}$ categories that users are most likely to be interested in.
\end{tcolorbox}
Then, during the test stage, we observe that by adjusting the size of $K$ in the objective, we could modify the category diversity in the output item set as shown in Table \ref{table:diversity_prompt}.

\begin{table}[!h]
\addtolength{\tabcolsep}{5pt}
\centering
\footnotesize
\caption{Performance on category diversity.}
\begin{tabular}{@{}ccc@{}}
\toprule
\textbf{K} & \textbf{Category Recall@1000} & \textbf{\#Category} \\ \midrule
4          & 0.767               & 74.7                   \\
8          & 0.772               & 80.8                   \\
16         & 0.775               & 94.1                   \\
32         & 0.777               & 104.2                  \\
64         & 0.778               & 113.7                  \\
128        & \textbf{0.780}               & \textbf{117.1}                  \\
\midrule
\textbf{RI}       & \textbf{+1.7\%}      & \textbf{+56.8\%}       \\ \bottomrule
\end{tabular}
\label{table:diversity_prompt}
\end{table}

\subsection{More Experiments on Zero-shot Learning}
\label{appendix:zero_shot_prompt}

\textbf{Hybrid Objectives: Serendipity \& Purchase Prediction Retrieval.}
As shown in Table \ref{table:mix_ER_PP}, when the SR objective and PPR objective are combined, the set produced by the URM aligns better with the purchase objective, thereby improving PPR R@1000 compared to using the SR objective. At the same time, the percentage of new categories in the output set also increases compared to that of the PPR objective. 
This makes the output item set close to the distribution of items that the user has not clicked on before, but is highly likely to purchase next.

\begin{table}[!h]
\addtolength{\tabcolsep}{5pt}
\centering
\footnotesize
\caption{The result of hybridizing the SR and PPR objectives.}
\begin{tabular}{@{}lll@{}}
\toprule
\textbf{Objective} & \textbf{\begin{tabular}[c]{@{}l@{}}PPR\\ R@1000\end{tabular}} & \textbf{\begin{tabular}[c]{@{}l@{}}Percent of \\ New Category\end{tabular}} \\ \midrule
PPR      & \textbf{0.581} & 25.6\%          \\
SR      & 0.411          & \textbf{46.2\%} \\
PPR $\times$ SR & 0.510          & 45.1\%          \\ \bottomrule
\end{tabular}
\label{table:mix_ER_PP}
\end{table}

\textbf{Hybrid Objectives: Long-tail Item Retrieval for Scene A.}
As shown in Table \ref{table:mix_RSA_LIR}, when the RSA objective and LIR objective are combined, the output set is closer to the distribution of scenario A, thereby improving RSA R@1000 compared to using the LIR objective. 
Additionally, there is an increase in the proportion of long-tail items within the output set compared to the RSA objective alone. Consequently, the output item set becomes more representative of the long-tail items that are most likely to occur in scenario A.

\begin{table}[!h]
\addtolength{\tabcolsep}{5pt}
\centering
\footnotesize
\caption{The result of hybridizing the RSA and LIR objectives.}
\begin{tabular}{@{}lll@{}}
\toprule
\textbf{Objective} & \textbf{\begin{tabular}[c]{@{}l@{}}RSA\\ R@1000\end{tabular}} & \textbf{\begin{tabular}[c]{@{}l@{}}Percent of \\ Long-tail Items\end{tabular}} \\ \midrule
RSA       & \textbf{0.530} & 72.6\%          \\
LIR       & 0.410          & \textbf{81.6\%} \\
RSA $\times$ LIR & 0.483          & 81.5\%          \\ \bottomrule
\end{tabular}
\label{table:mix_RSA_LIR}
\end{table}

\textbf{New Objective.}
Further, we validate the zero-shot task transfer performance from the CPR task to the RQ task.
As shown in Figure \ref{fig:compare_rq}, as the number of training samples increases, URM trained only on the CPR task can still learn the mapping between the query and target sets and reach $0.698$ on RQ R@1000! The reason is that our input sequence includes both text and item ID tokens, which allows the URM to align items with the semantic space, even when no query-related tasks exist in the training samples. 
Of course,  the performance of zero-shot task transfer still has a significant gap with supervised training on the corresponding task (MTL vs STL on CPR). 
However, this fully demonstrates that URM inherits the powerful task transfer capabilities of LLM.

\begin{figure}[!h]
\vspace{-5pt}
    \centering
    \includegraphics[width=0.6\linewidth]{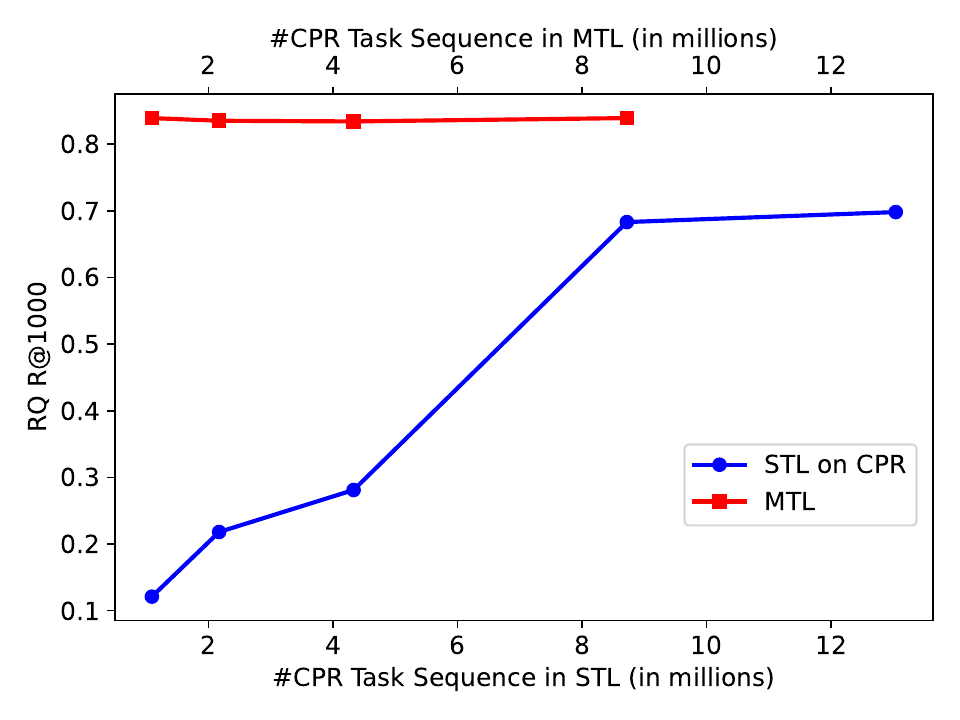}
    \vspace{-15pt}
    \caption{Zero-shot task transfer performance from the CPR task to the RQ task.}
    \label{fig:compare_rq}
\vspace{-15pt}
\end{figure}

\textbf{New Context.}
By modifying the query in the RQ objective to a specific context,  we can model the changes in user behavior over periods, such as during seasonal changes, festival celebrations, or shopping events.
Figure \ref{fig:zero_shot_query} gives two examples.
The fundamental reason why URM can achieve this is that the training data for the RQ (Retrieval with Query) has established a comprehensive mapping between user behaviors, text constraints, and the target item set.
This has two potential advantages. 
\begin{enumerate}
    \item By changing the query to a certain context, we can inject external world knowledge into URM, allowing the results to get ready for potential upcoming events.
    \item This also supports the combination of URM and Chain-of-Thought (CoT) technologies \cite{wei2023chainofthoughtpromptingelicitsreasoning}. Specifically, the LLM generates intermediate results in the form of text through reasoning, and then injects this text as context into URM, thereby producing the final set.
\end{enumerate}

\begin{figure}[htbp]
\vspace{-15pt}
    \centering
    \begin{subfigure}
        \centering
        \includegraphics[width=0.65\linewidth]{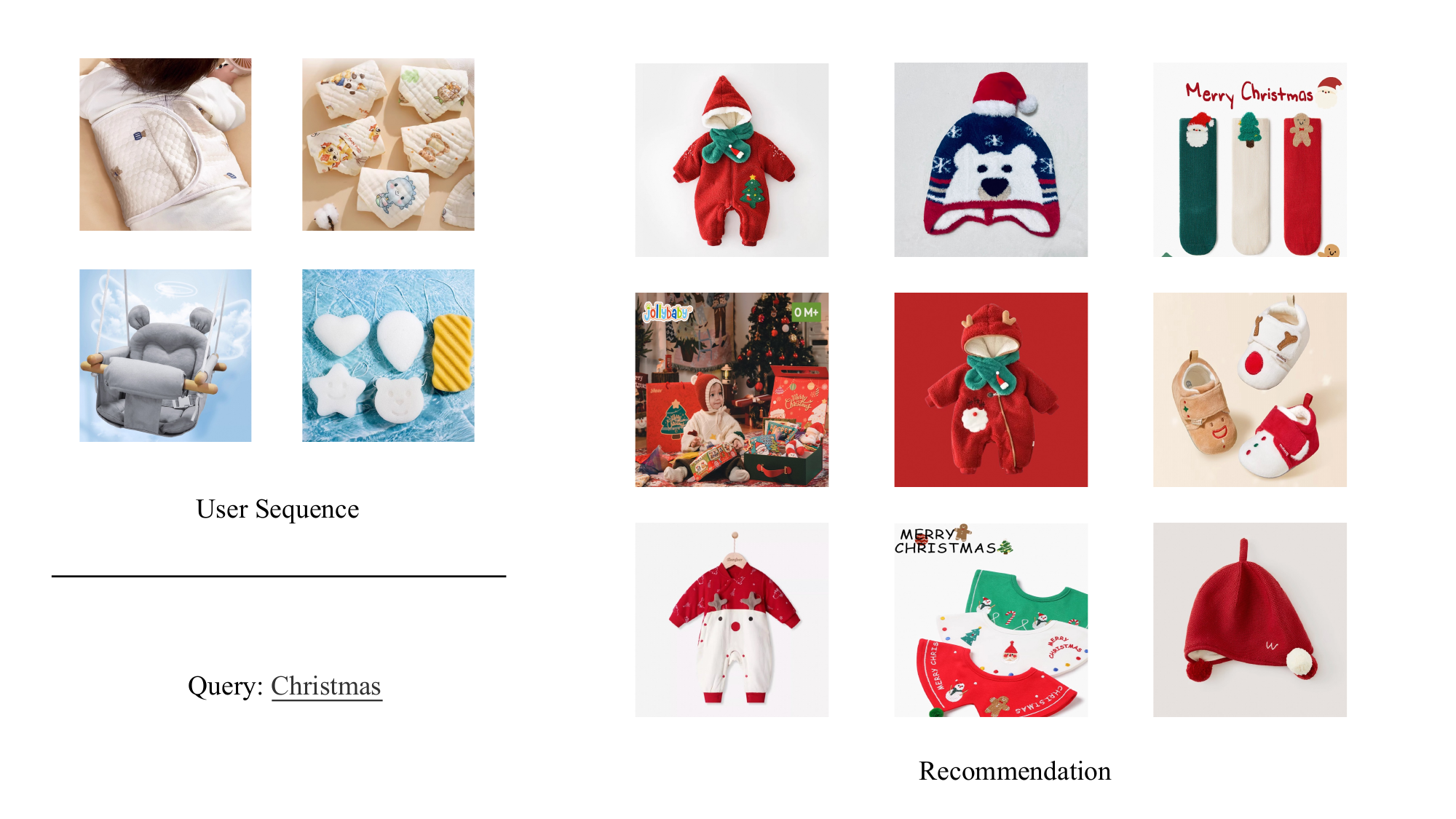}
        \label{fig:query_case1}
    \end{subfigure}
    \vspace{-10pt}
    \begin{subfigure}
        \centering
        \includegraphics[width=0.65\linewidth]{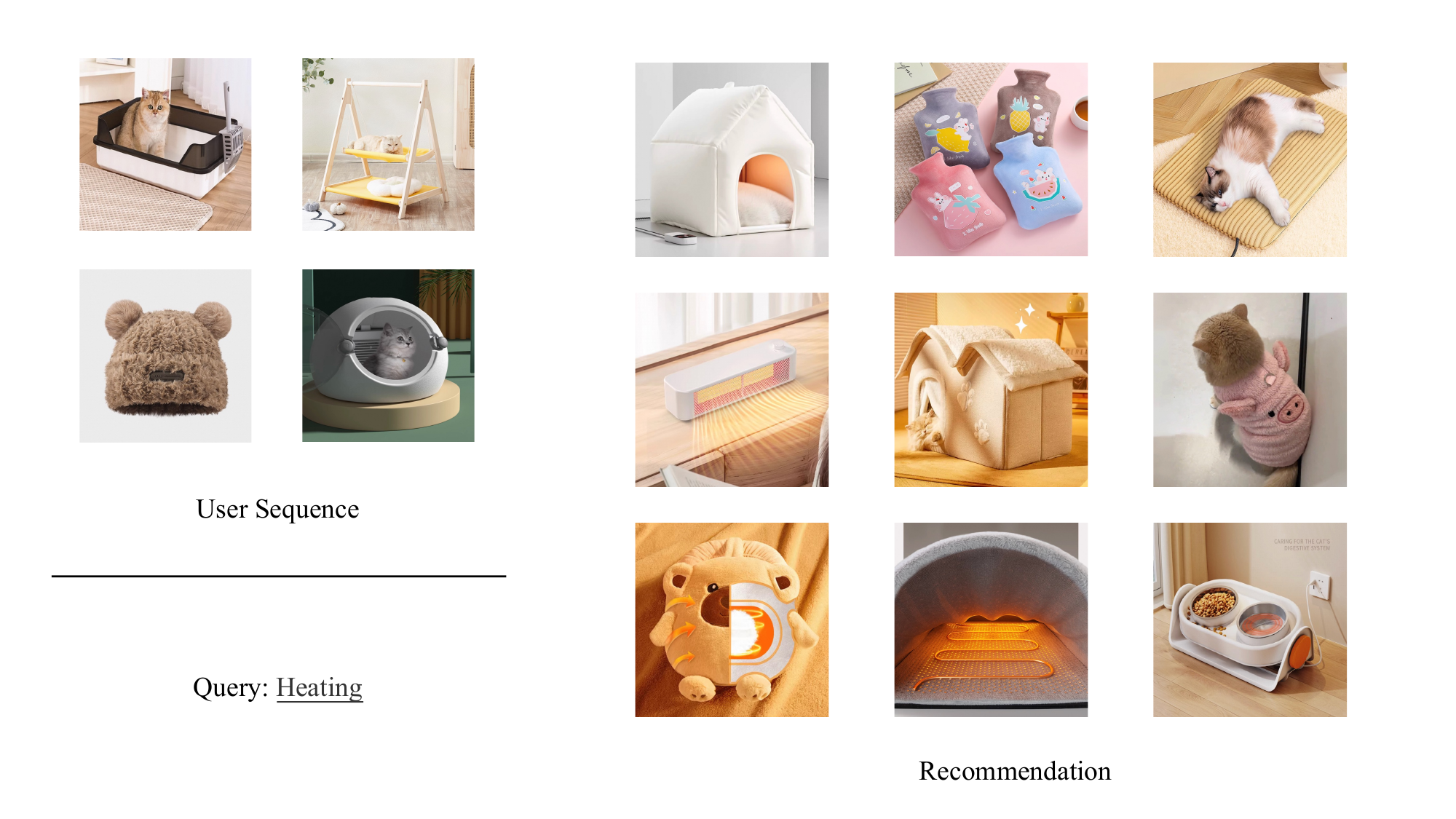}
        \label{fig:query_case2}
    \end{subfigure}
    \caption{Zero-shot task transfer to a new context. Christmas is a festive celebration, and heating is a common need in winter. When these two contexts are injected as queries into the objective, the model can retrieve items based on the user's historical behavior and the given context simultaneously.}
    \label{fig:zero_shot_query}
\end{figure}

\subsection{More Experiments on Multi-Query Representation}
\label{appendix:multi_query_representation}
\textbf{Visualization of User Representations from Different Query Tokens.} To further validate the role of query tokens, we categorize the output $1000$ items from URM into the user representations that have the largest inner product with it. The items under $3$ user representations are shown in Figure \ref{fig:head_case}. It can be observed that different low-dimensional user representations have captured different user interests, which enhances the representation ability of URM for the target item set.

\begin{figure}[!h]
    \centering
    \includegraphics[width=0.7\linewidth]{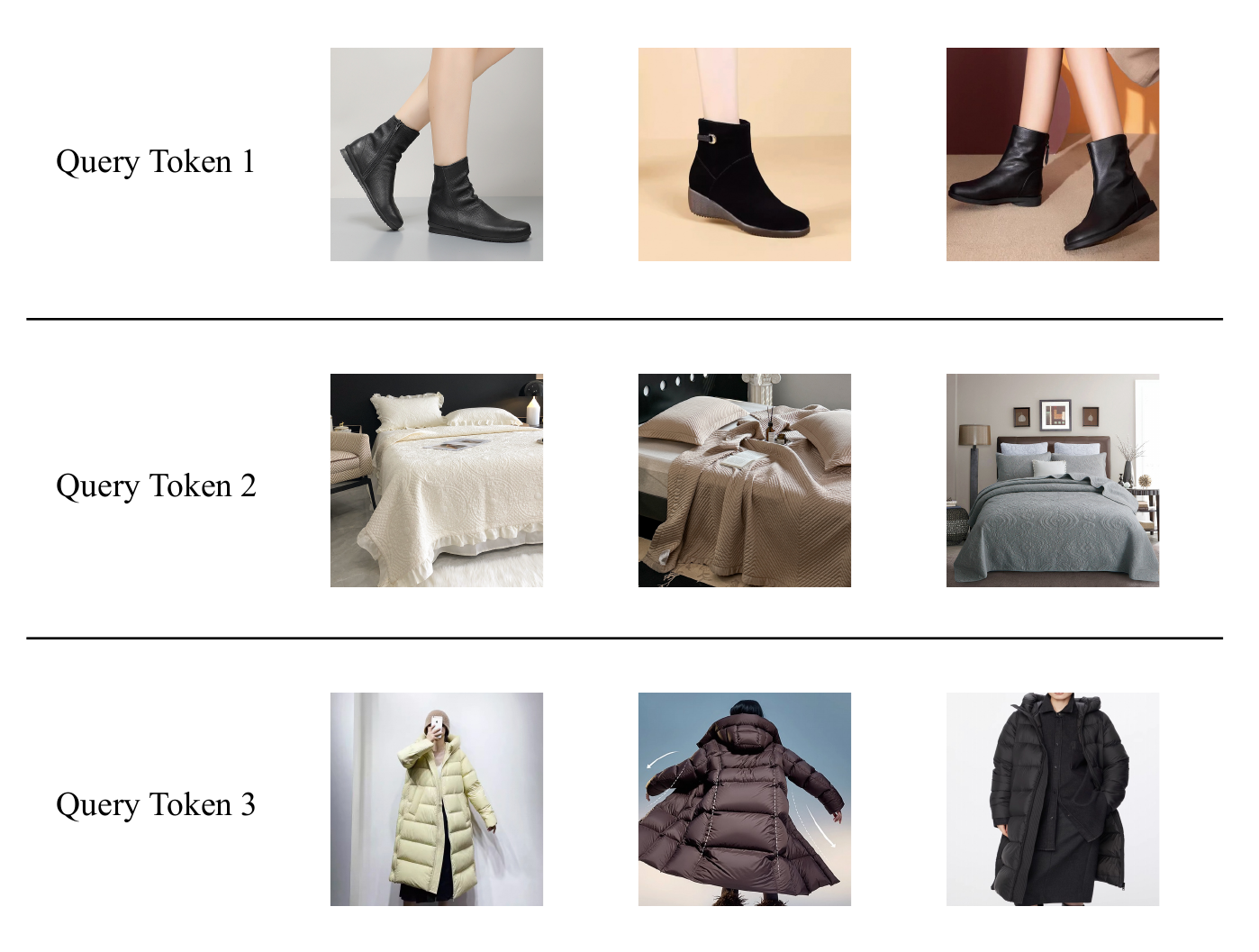}
    \caption{Visualization of user representations from different query tokens. Each token captures distinct facets of user interest, enabling LLM to jointly express different aspects of the target item set.}
    \label{fig:head_case}
\end{figure}

\textbf{Activation of Query Tokens on Different Tasks.} 
Figure \ref{fig:head_proportion} shows the activation proportion for $8$ different query tokens in the CPR and RQ tasks. A token is activated when the dot product of the user representation corresponding to that query token and the target item representation is used for the final output.
We find that different query tokens exhibit variation across different tasks. 

\begin{figure}[!h]
    \centering
    \includegraphics[width=0.6\linewidth]{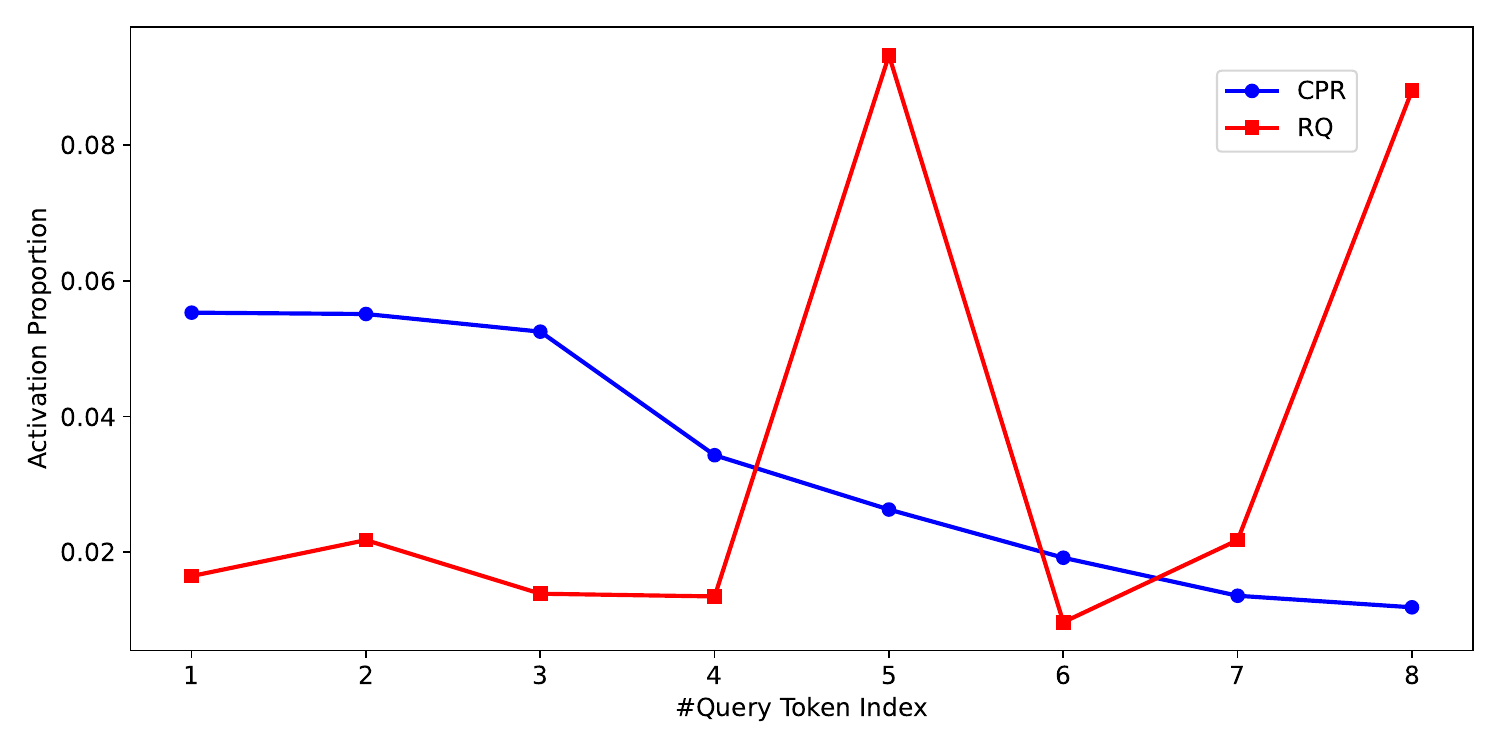}
    \caption{Activation proportion of different query tokens in CPR and RQ tasks.}
    \label{fig:head_proportion}
\end{figure}

\textbf{The Necessity of Multi-Query Representation Format.}
URM employs multiple query tokens to generate several 128-dimensional output user representations, which are subsequently computed alongside the 128-dimensional item representation. To explore the effect of the multi-query representation format, we conduct experiments using higher embedding dimensions and a single token, with the results presented in Table \ref{table:multi_token}. On one hand, when a single token is used to derive a 4096-dimensional user representation, which is then split into 32 separate 128-dimensional representations, the CPR recall drops significantly from 0.248 to 0.163. This highlights the advantage of generating multiple user representations with different query tokens, even when the formal dimension remains constant. On the other hand, increasing the item representation dimension from 128 to 4096 (resulting in logits computed in a higher-dimensional space) does not enhance performance compared to the logits derived from multiple tokens, yielding a mere CPR recall of 0.203. These findings confirm that the multi-query representation design doesn't solely benefit from its higher dimensionality; rather, it demonstrates superior effectiveness in capturing the diverse interests of users within this specific format.

\begin{table}[!h]
\addtolength{\tabcolsep}{10pt}
\centering
\footnotesize
\caption{The effect of multiple query tokens (metric: R@1000).}
\begin{tabular}{@{}lll@{}}
\toprule
\textbf{Methods} & \begin{tabular}[c]{@{}l@{}}\textbf{CPR}\end{tabular} & \begin{tabular}[c]{@{}l@{}}\textbf{RQ} \end{tabular} \\ \midrule
32 Tokens with 128-d Embeddings   & \textbf{0.248}                                                    & \textbf{0.835}                                                     \\
1 Token with 4096-d User Representation               & 0.163                                                    & 0.774       \\
1 Token with 4096-d Item Representation & 0.203 & 0.781
\\ \bottomrule
\end{tabular}
\label{table:multi_token}
\end{table}

\subsection{Visualization Comparison of Different Item Representation}
\label{appedix:visualize_item}
To visualize different item representations, we use the nearest neighbor retrieval approach to find similar items to a given item under a specific representation.
As shown in Figure \ref{fig:emb_case1},\ref{fig:emb_case2},\ref{fig:emb_case3}, 
$V_{\text{trans}}$ mainly captures similarity relationships, whereas $V_{\text{dis}}$ are more centered on co-occurrence relationships. The fusion representations fall somewhere in between these two.

% \begin{figure}[!h]
%     \centering
%     \begin{subfigure}
%         \centering
%         \includegraphics[width=0.8\linewidth]{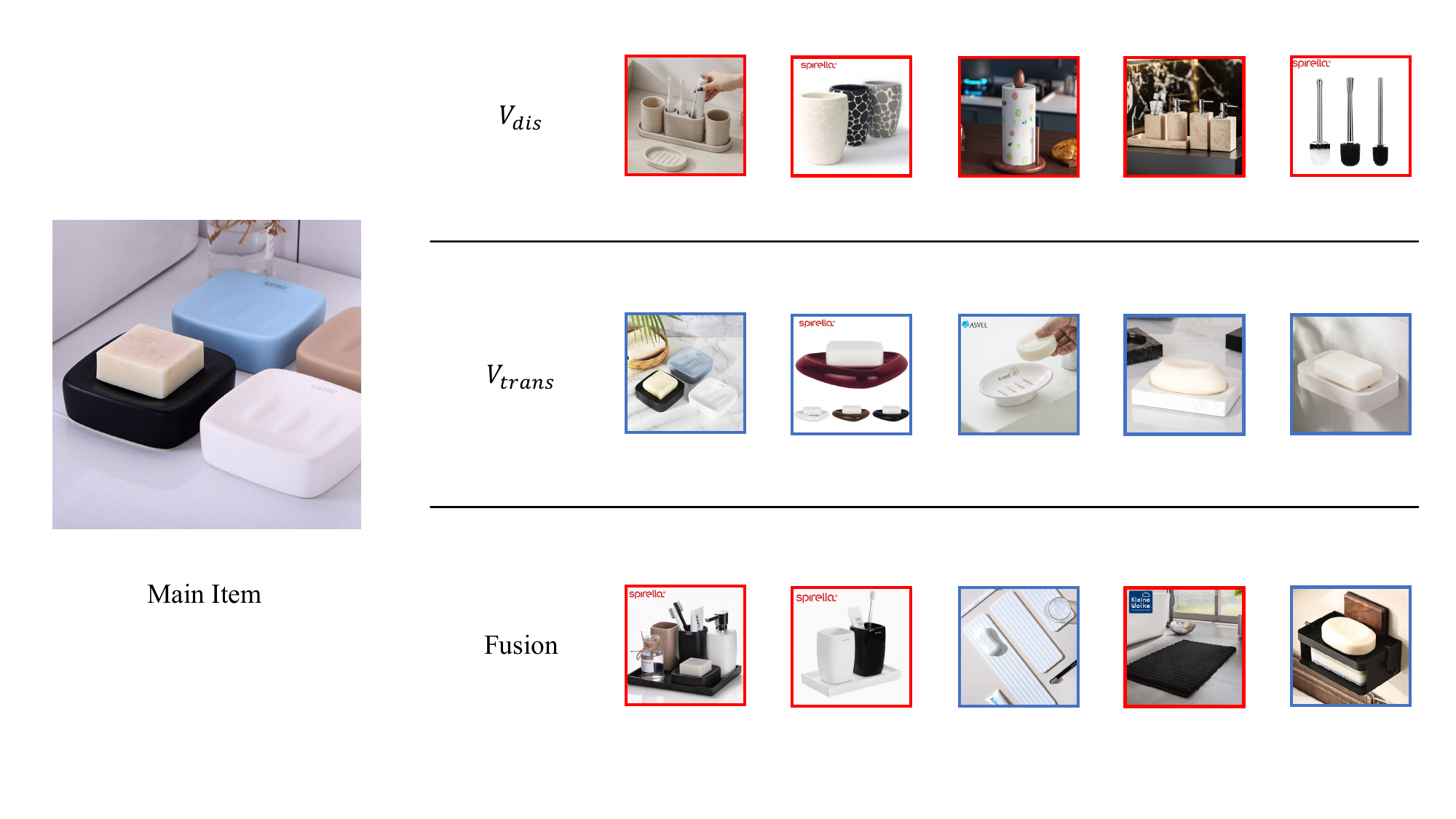}
%         \label{fig:emb_case1}
%     \end{subfigure}
    
%     \hfill
    
%     \begin{subfigure}
%         \centering
%         \includegraphics[width=0.8\linewidth]{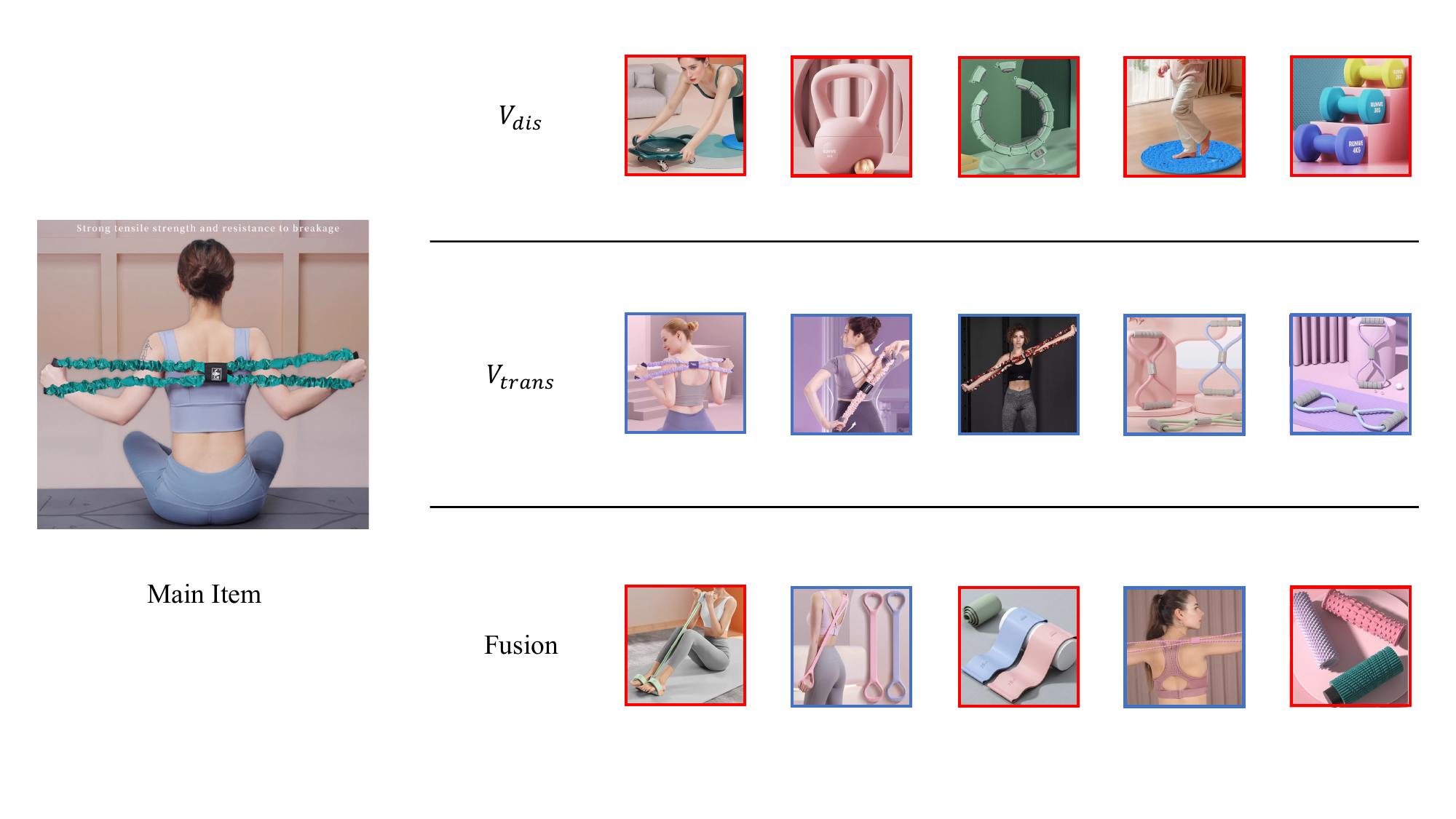}
%         \label{fig:emb_case2}
%     \end{subfigure}
    
%     \hfill
%     \caption{Visualization of different item representations.}
%     \label{fig:emb_case}
% \end{figure}

\begin{figure}[!h]
    \centering
    \includegraphics[width=0.8\linewidth]{figs/EmbCase1.pdf}
    \caption{Visualization of different item representations (case 1).}
    \label{fig:emb_case1}
\end{figure}

\begin{figure}[!h]
    \centering
    \includegraphics[width=0.8\linewidth]{figs/EmbCase2.pdf}
    \caption{Visualization of different item representations (case 2).}
    \label{fig:emb_case2}
\end{figure}

\newpage

\begin{figure}[!h]
    \centering
    \includegraphics[width=0.8\linewidth]{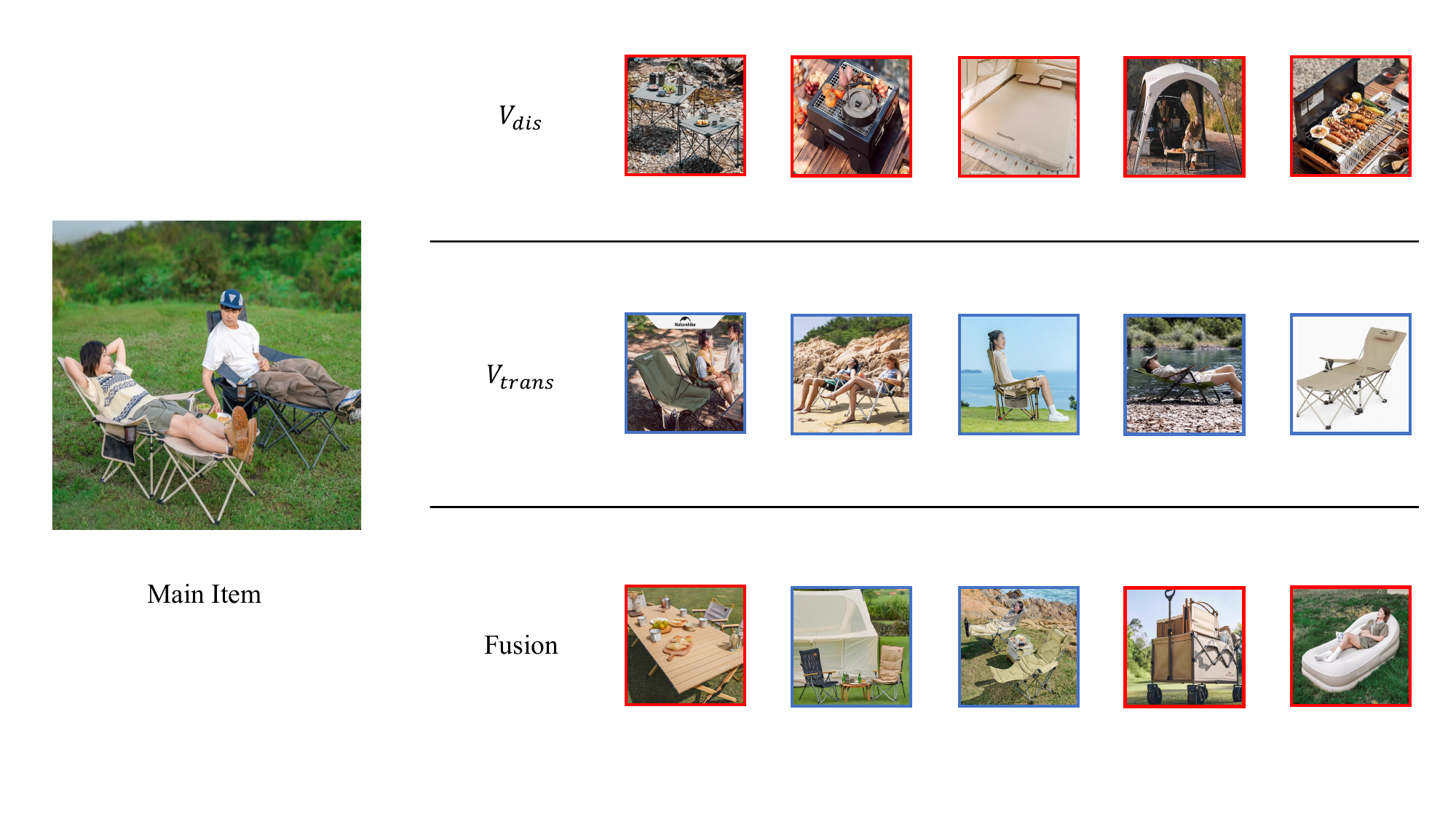}
    \caption{Visualization of different item representations (case 3).}
    \label{fig:emb_case3}
\end{figure}

\newpage
\newpage
\section*{NeurIPS Paper Checklist}
\begin{enumerate}

\item {\bf Claims}
    \item[] Question: Do the main claims made in the abstract and introduction accurately reflect the paper's contributions and scope?
    \item[] Answer: \answerYes{} % Replace by \answerYes{}, \answerNo{}, or \answerNA{}.
    \item[] Justification: The abstract and introduction clearly state our
contributions -- introducing the Universal Retrieval Model (URM), enhancing model expressiveness, learnability, and efficiency of generative retrieval, and achieving improvements in both offline and online metrics.

    \item[] Guidelines:
    \begin{itemize}
        \item The answer NA means that the abstract and introduction do not include the claims made in the paper.
        \item The abstract and/or introduction should clearly state the claims made, including the contributions made in the paper and important assumptions and limitations. A No or NA answer to this question will not be perceived well by the reviewers. 
        \item The claims made should match theoretical and experimental results, and reflect how much the results can be expected to generalize to other settings. 
        \item It is fine to include aspirational goals as motivation as long as it is clear that these goals are not attained by the paper. 
    \end{itemize}

\item {\bf Limitations}
    \item[] Question: Does the paper discuss the limitations of the work performed by the authors?
    \item[] Answer:  \answerYes{} % Replace by \answerYes{}, \answerNo{}, or \answerNA{}.
    \item[] Justification: Due to space constraints, we have placed the limitations section in the Appendix \ref{appendix:limitation}. 
    \item[] Guidelines:
    \begin{itemize}
        \item The answer NA means that the paper has no limitation while the answer No means that the paper has limitations, but those are not discussed in the paper. 
        \item The authors are encouraged to create a separate "Limitations" section in their paper.
        \item The paper should point out any strong assumptions and how robust the results are to violations of these assumptions (e.g., independence assumptions, noiseless settings, model well-specification, asymptotic approximations only holding locally). The authors should reflect on how these assumptions might be violated in practice and what the implications would be.
        \item The authors should reflect on the scope of the claims made, e.g., if the approach was only tested on a few datasets or with a few runs. In general, empirical results often depend on implicit assumptions, which should be articulated.
        \item The authors should reflect on the factors that influence the performance of the approach. For example, a facial recognition algorithm may perform poorly when image resolution is low or images are taken in low lighting. Or a speech-to-text system might not be used reliably to provide closed captions for online lectures because it fails to handle technical jargon.
        \item The authors should discuss the computational efficiency of the proposed algorithms and how they scale with dataset size.
        \item If applicable, the authors should discuss possible limitations of their approach to address problems of privacy and fairness.
        \item While the authors might fear that complete honesty about limitations might be used by reviewers as grounds for rejection, a worse outcome might be that reviewers discover limitations that aren't acknowledged in the paper. The authors should use their best judgment and recognize that individual actions in favor of transparency play an important role in developing norms that preserve the integrity of the community. Reviewers will be specifically instructed to not penalize honesty concerning limitations.
    \end{itemize}

\item {\bf Theory assumptions and proofs}
    \item[] Question: For each theoretical result, does the paper provide the full set of assumptions and a complete (and correct) proof?
    \item[] Answer: \answerYes{} % Replace by \answerYes{}, \answerNo{}, or \answerNA{}.
    \item[] Justification: We provide detailed proof in Appendix \ref{proof} for the assumption in Algorithm \ref{algorithm}.
    \item[] Guidelines:
    \begin{itemize}
        \item The answer NA means that the paper does not include theoretical results. 
        \item All the theorems, formulas, and proofs in the paper should be numbered and cross-referenced.
        \item All assumptions should be clearly stated or referenced in the statement of any theorems.
        \item The proofs can either appear in the main paper or the supplemental material, but if they appear in the supplemental material, the authors are encouraged to provide a short proof sketch to provide intuition. 
        \item Inversely, any informal proof provided in the core of the paper should be complemented by formal proofs provided in appendix or supplemental material.
        \item Theorems and Lemmas that the proof relies upon should be properly referenced. 
    \end{itemize}

\item {\bf Experimental result reproducibility}
    \item[] Question: Does the paper fully disclose all the information needed to reproduce the main experimental results of the paper to the extent that it affects the main claims and/or conclusions of the paper (regardless of whether the code and data are provided or not)?
    \item[] Answer: \answerYes{} % Replace by \answerYes{}, \answerNo{}, or \answerNA{}.
    \item[] Justification: In Algorithm \ref{algorithm}, we provide a pseudo-code for our probabilistic sampling methods.
    In Section \ref{sec:open_dataset_experiment},\ref{sec:industrial_experiment}, Appendix \ref{appendix:dataset_details},\ref{sec:Implementation},\ref{sec:Hyperparameters}, we provide all the details to reproduce our experiment.
    \item[] Guidelines:
    \begin{itemize}
        \item The answer NA means that the paper does not include experiments.
        \item If the paper includes experiments, a No answer to this question will not be perceived well by the reviewers: Making the paper reproducible is important, regardless of whether the code and data are provided or not.
        \item If the contribution is a dataset and/or model, the authors should describe the steps taken to make their results reproducible or verifiable. 
        \item Depending on the contribution, reproducibility can be accomplished in various ways. For example, if the contribution is a novel architecture, describing the architecture fully might suffice, or if the contribution is a specific model and empirical evaluation, it may be necessary to either make it possible for others to replicate the model with the same dataset, or provide access to the model. In general. releasing code and data is often one good way to accomplish this, but reproducibility can also be provided via detailed instructions for how to replicate the results, access to a hosted model (e.g., in the case of a large language model), releasing of a model checkpoint, or other means that are appropriate to the research performed.
        \item While NeurIPS does not require releasing code, the conference does require all submissions to provide some reasonable avenue for reproducibility, which may depend on the nature of the contribution. For example
        \begin{enumerate}
            \item If the contribution is primarily a new algorithm, the paper should make it clear how to reproduce that algorithm.
            \item If the contribution is primarily a new model architecture, the paper should describe the architecture clearly and fully.
            \item If the contribution is a new model (e.g., a large language model), then there should either be a way to access this model for reproducing the results or a way to reproduce the model (e.g., with an open-source dataset or instructions for how to construct the dataset).
            \item We recognize that reproducibility may be tricky in some cases, in which case authors are welcome to describe the particular way they provide for reproducibility. In the case of closed-source models, it may be that access to the model is limited in some way (e.g., to registered users), but it should be possible for other researchers to have some path to reproducing or verifying the results.
        \end{enumerate}
    \end{itemize}

\item {\bf Open access to data and code}
    \item[] Question: Does the paper provide open access to the data and code, with sufficient instructions to faithfully reproduce the main experimental results, as described in supplemental material?
    \item[] Answer: \answerNo{} % Replace by \answerYes{}, \answerNo{}, or \answerNA{}.
    \item[] Justification: Upon acceptance, we will release the code related to the public datasets.
    Due to data security concerns, our code for the industrial online and offline experiments cannot be made public.
    \item[] Guidelines:
    \begin{itemize}
        \item The answer NA means that paper does not include experiments requiring code.
        \item Please see the NeurIPS code and data submission guidelines (\url{https://nips.cc/public/guides/CodeSubmissionPolicy}) for more details.
        \item While we encourage the release of code and data, we understand that this might not be possible, so “No” is an acceptable answer. Papers cannot be rejected simply for not including code, unless this is central to the contribution (e.g., for a new open-source benchmark).
        \item The instructions should contain the exact command and environment needed to run to reproduce the results. See the NeurIPS code and data submission guidelines (\url{https://nips.cc/public/guides/CodeSubmissionPolicy}) for more details.
        \item The authors should provide instructions on data access and preparation, including how to access the raw data, preprocessed data, intermediate data, and generated data, etc.
        \item The authors should provide scripts to reproduce all experimental results for the new proposed method and baselines. If only a subset of experiments are reproducible, they should state which ones are omitted from the script and why.
        \item At submission time, to preserve anonymity, the authors should release anonymized versions (if applicable).
        \item Providing as much information as possible in supplemental material (appended to the paper) is recommended, but including URLs to data and code is permitted.
    \end{itemize}

\item {\bf Experimental setting/details}
    \item[] Question: Does the paper specify all the training and test details (e.g., data splits, hyperparameters, how they were chosen, type of optimizer, etc.) necessary to understand the results?
    \item[] Answer: \answerYes{} % Replace by \answerYes{}, \answerNo{}, or \answerNA{}.
    \item[] Justification: The experimental setting is presented Section \ref{sec:open_dataset_experiment} and \ref{sec:industrial_experiment}.
       The full details can be found in Appendix \ref{sec:Implementation} and \ref{sec:Hyperparameters}.
    \item[] Guidelines:
    \begin{itemize}
        \item The answer NA means that the paper does not include experiments.
        \item The experimental setting should be presented in the core of the paper to a level of detail that is necessary to appreciate the results and make sense of them.
        \item The full details can be provided either with the code, in appendix, or as supplemental material.
    \end{itemize}

\item {\bf Experiment statistical significance}
    \item[] Question: Does the paper report error bars suitably and correctly defined or other appropriate information about the statistical significance of the experiments?
    \item[] Answer: \answerNo{} % Replace by \answerYes{}, \answerNo{}, or \answerNA{}.
    \item[] Justification: 
    Given the high computational cost, we did not include detailed statistical significance measures for our experiments. Nonetheless, our method consistently outperforms baseline methods across multiple tasks using both public and industrial datasets, showing marked improvements in online performance. This serves as compelling evidence of our approach's effectiveness.
    % Due to the high computational cost, we did not provide precise statistical significance measures for our experiments. However, our method outperforms the baseline methods  on multiple tasks of both public and industrial datasets and demonstrate considerable improvements in online performance. This provides strong evidence supporting the effectiveness of our approach despite the absence of detailed statistical analysis.
    \item[] Guidelines:
    \begin{itemize}
        \item The answer NA means that the paper does not include experiments.
        \item The authors should answer "Yes" if the results are accompanied by error bars, confidence intervals, or statistical significance tests, at least for the experiments that support the main claims of the paper.
        \item The factors of variability that the error bars are capturing should be clearly stated (for example, train/test split, initialization, random drawing of some parameter, or overall run with given experimental conditions).
        \item The method for calculating the error bars should be explained (closed form formula, call to a library function, bootstrap, etc.)
        \item The assumptions made should be given (e.g., Normally distributed errors).
        \item It should be clear whether the error bar is the standard deviation or the standard error of the mean.
        \item It is OK to report 1-sigma error bars, but one should state it. The authors should preferably report a 2-sigma error bar than state that they have a 96\% CI, if the hypothesis of Normality of errors is not verified.
        \item For asymmetric distributions, the authors should be careful not to show in tables or figures symmetric error bars that would yield results that are out of range (e.g. negative error rates).
        \item If error bars are reported in tables or plots, The authors should explain in the text how they were calculated and reference the corresponding figures or tables in the text.
    \end{itemize}

\item {\bf Experiments compute resources}
    \item[] Question: For each experiment, does the paper provide sufficient information on the computer resources (type of compute workers, memory, time of execution) needed to reproduce the experiments?
    \item[] Answer: \answerYes{} % Replace by \answerYes{}, \answerNo{}, or \answerNA{}.
    \item[] Justification: The computer resources for our experiment are given in Appendix \ref{sec:Hyperparameters} and \ref{appendix:inference_efficiency}.
    \item[] Guidelines:
    \begin{itemize}
        \item The answer NA means that the paper does not include experiments.
        \item The paper should indicate the type of compute workers CPU or GPU, internal cluster, or cloud provider, including relevant memory and storage.
        \item The paper should provide the amount of compute required for each of the individual experimental runs as well as estimate the total compute. 
        \item The paper should disclose whether the full research project required more compute than the experiments reported in the paper (e.g., preliminary or failed experiments that didn't make it into the paper). 
    \end{itemize}
    
\item {\bf Code of ethics}
    \item[] Question: Does the research conducted in the paper conform, in every respect, with the NeurIPS Code of Ethics \url{https://neurips.cc/public/EthicsGuidelines}?
    \item[] Answer: \answerYes{} % Replace by \answerYes{}, \answerNo{}, or \answerNA{}.
    \item[] Justification: Our manuscript adheres to the NeurIPS Code of Ethics by ensuring equitable treatment, safeguarding data privacy, and so on.
    \item[] Guidelines:
    \begin{itemize}
        \item The answer NA means that the authors have not reviewed the NeurIPS Code of Ethics.
        \item If the authors answer No, they should explain the special circumstances that require a deviation from the Code of Ethics.
        \item The authors should make sure to preserve anonymity (e.g., if there is a special consideration due to laws or regulations in their jurisdiction).
    \end{itemize}

\item {\bf Broader impacts}
    \item[] Question: Does the paper discuss both potential positive societal impacts and negative societal impacts of the work performed?
    \item[] Answer: \answerYes{} % Replace by \answerYes{}, \answerNo{}, or \answerNA{}.
    \item[] Justification: We discuss potential positive societal impacts in Appendix \ref{appendix:broader_impacts}, and we foresee no negative societal impacts resulting from our research. There may be concerns regarding the use of sensitive user or item information; however, we address this by omitting such details and utilizing non-sensitive attributes in practice.
    \item[] Guidelines:
    \begin{itemize}
        \item The answer NA means that there is no societal impact of the work performed.
        \item If the authors answer NA or No, they should explain why their work has no societal impact or why the paper does not address societal impact.
        \item Examples of negative societal impacts include potential malicious or unintended uses (e.g., disinformation, generating fake profiles, surveillance), fairness considerations (e.g., deployment of technologies that could make decisions that unfairly impact specific groups), privacy considerations, and security considerations.
        \item The conference expects that many papers will be foundational research and not tied to particular applications, let alone deployments. However, if there is a direct path to any negative applications, the authors should point it out. For example, it is legitimate to point out that an improvement in the quality of generative models could be used to generate deepfakes for disinformation. On the other hand, it is not needed to point out that a generic algorithm for optimizing neural networks could enable people to train models that generate Deepfakes faster.
        \item The authors should consider possible harms that could arise when the technology is being used as intended and functioning correctly, harms that could arise when the technology is being used as intended but gives incorrect results, and harms following from (intentional or unintentional) misuse of the technology.
        \item If there are negative societal impacts, the authors could also discuss possible mitigation strategies (e.g., gated release of models, providing defenses in addition to attacks, mechanisms for monitoring misuse, mechanisms to monitor how a system learns from feedback over time, improving the efficiency and accessibility of ML).
    \end{itemize}
    
\item {\bf Safeguards}
    \item[] Question: Does the paper describe safeguards that have been put in place for responsible release of data or models that have a high risk for misuse (e.g., pretrained language models, image generators, or scraped datasets)?
    \item[] Answer: \answerNA{} % Replace by \answerYes{}, \answerNo{}, or \answerNA{}.
    \item[] Justification: Our paper poses no such risks.
    \item[] Guidelines:
    \begin{itemize}
        \item The answer NA means that the paper poses no such risks.
        \item Released models that have a high risk for misuse or dual-use should be released with necessary safeguards to allow for controlled use of the model, for example by requiring that users adhere to usage guidelines or restrictions to access the model or implementing safety filters. 
        \item Datasets that have been scraped from the Internet could pose safety risks. The authors should describe how they avoided releasing unsafe images.
        \item We recognize that providing effective safeguards is challenging, and many papers do not require this, but we encourage authors to take this into account and make a best faith effort.
    \end{itemize}

\item {\bf Licenses for existing assets}
    \item[] Question: Are the creators or original owners of assets (e.g., code, data, models), used in the paper, properly credited and are the license and terms of use explicitly mentioned and properly respected?
    \item[] Answer: \answerYes{} % Replace by \answerYes{}, \answerNo{}, or \answerNA{}.
    \item[] Justification: We have cited all the original papers that produced the code package or dataset.
    \item[] Guidelines:
    \begin{itemize}
        \item The answer NA means that the paper does not use existing assets.
        \item The authors should cite the original paper that produced the code package or dataset.
        \item The authors should state which version of the asset is used and, if possible, include a URL.
        \item The name of the license (e.g., CC-BY 4.0) should be included for each asset.
        \item For scraped data from a particular source (e.g., website), the copyright and terms of service of that source should be provided.
        \item If assets are released, the license, copyright information, and terms of use in the package should be provided. For popular datasets, \url{paperswithcode.com/datasets} has curated licenses for some datasets. Their licensing guide can help determine the license of a dataset.
        \item For existing datasets that are re-packaged, both the original license and the license of the derived asset (if it has changed) should be provided.
        \item If this information is not available online, the authors are encouraged to reach out to the asset's creators.
    \end{itemize}

\item {\bf New assets}
    \item[] Question: Are new assets introduced in the paper well documented and is the documentation provided alongside the assets?
    \item[] Answer: \answerNA{} % Replace by \answerYes{}, \answerNo{}, or \answerNA{}.
    \item[] Justification: Our paper does not release new assets.
    % Section \ref{sec:industrial_experiment} and Appendix \ref{appendix:dataset_details} provide the details of our industrial dataset.
    \item[] Guidelines:
    \begin{itemize}
        \item The answer NA means that the paper does not release new assets.
        \item Researchers should communicate the details of the dataset/code/model as part of their submissions via structured templates. This includes details about training, license, limitations, etc. 
        \item The paper should discuss whether and how consent was obtained from people whose asset is used.
        \item At submission time, remember to anonymize your assets (if applicable). You can either create an anonymized URL or include an anonymized zip file.
    \end{itemize}

\item {\bf Crowdsourcing and research with human subjects}
    \item[] Question: For crowdsourcing experiments and research with human subjects, does the paper include the full text of instructions given to participants and screenshots, if applicable, as well as details about compensation (if any)? 
    \item[] Answer: \answerNA{}{} % Replace by \answerYes{}, \answerNo{}, or \answerNA{}.
    \item[] Justification: Our paper does not involve crowdsourcing nor research with human subjects.
    \item[] Guidelines:
    \begin{itemize}
        \item The answer NA means that the paper does not involve crowdsourcing nor research with human subjects.
        \item Including this information in the supplemental material is fine, but if the main contribution of the paper involves human subjects, then as much detail as possible should be included in the main paper. 
        \item According to the NeurIPS Code of Ethics, workers involved in data collection, curation, or other labor should be paid at least the minimum wage in the country of the data collector. 
    \end{itemize}

\item {\bf Institutional review board (IRB) approvals or equivalent for research with human subjects}
    \item[] Question: Does the paper describe potential risks incurred by study participants, whether such risks were disclosed to the subjects, and whether Institutional Review Board (IRB) approvals (or an equivalent approval/review based on the requirements of your country or institution) were obtained?
    \item[] Answer: \answerNA{} % Replace by \answerYes{}, \answerNo{}, or \answerNA{}.
    \item[] Justification: Our paper does not involve crowdsourcing nor research with human subjects.
    % \justificationTODO{}
    \item[] Guidelines:
    \begin{itemize}
        \item The answer NA means that the paper does not involve crowdsourcing nor research with human subjects.
        \item Depending on the country in which research is conducted, IRB approval (or equivalent) may be required for any human subjects research. If you obtained IRB approval, you should clearly state this in the paper. 
        \item We recognize that the procedures for this may vary significantly between institutions and locations, and we expect authors to adhere to the NeurIPS Code of Ethics and the guidelines for their institution. 
        \item For initial submissions, do not include any information that would break anonymity (if applicable), such as the institution conducting the review.
    \end{itemize}

\item {\bf Declaration of LLM usage}
    \item[] Question: Does the paper describe the usage of LLMs if it is an important, original, or non-standard component of the core methods in this research? Note that if the LLM is used only for writing, editing, or formatting purposes and does not impact the core methodology, scientific rigorousness, or originality of the research, declaration is not required.
    %this research? 
    \item[] Answer: \answerYes{} % Replace by \answerYes{}, \answerNo{}, or \answerNA{}.
    \item[] Justification: Section \ref{sec: representations}
 provides the inputs and outputs of the LLMs and a detailed explanation of the modifications we made to the existing LLMs.
    \item[] Guidelines:
    \begin{itemize}
        \item The answer NA means that the core method development in this research does not involve LLMs as any important, original, or non-standard components.
        \item Please refer to our LLM policy (\url{https://neurips.cc/Conferences/2025/LLM}) for what should or should not be described.
    \end{itemize}

\end{enumerate}

\end{document}